\theoremstyle{plain}
\newtheorem{theorem}{Theorem}
\newtheorem{proposition}{Proposition}
\newcommand{\dlzerok}{\mathcal{DLR}_{\mathit{reg}}^0[\leq k]}
\newcommand{\dlzerostar}{\mathcal{DLR}_{\mathit{reg}}^0[*]}
\newcommand{\dlzero}{\mathcal{DLR}_{\mathit{reg}}^0}
\newcommand{\UF}{\mbox{$\mbox{\rm U}_1^{}$}}
\newcommand{\UFO}[3]{
\ifthenelse{\equal{#3}{111}}
{$\forall^#1_{TC}[#2]$}{$\forall^#1_{TC}[#2,#3]$}}
\renewcommand{\phi}{\varphi} % Nicer-looking phi
\newcommand{\FOt}{\mbox{$\mbox{\rm FO}^2$}}
\newcommand{\str}[1]{{\mathfrak{#1}}}
\newcommand{\N}{{\mathbb N}}
\begin{document}

%\pagestyle{headings}  % switches on printing of running heads

%\mainmatter              % start of the contributions

\title{On the uniform one-dimensional fragment}
%
%\titlerunning{On the uniform one-dimensional fragment}  % abbreviated title (for running head)
%                                     also used for the TOC unless
%                                     \toctitle is used
%
\author{Antti Kuusisto\\
Tampere University and University of Helsinki}
\date{}
%
%\authorrunning{Antti Kuusisto} % abbreviated author list (for running %head)
%
%%%% list of authors for the TOC (use if author list has to be modified)
%
%\institute{University of Bremen, Germany, kuusisto@uni-bremen.de}
%\\
%\\ WWW home page:
%\texttt{http://users/\homedir iekeland/web/welcome.html}
%\and
%Universit\'{e} de Paris-Sud,
%Laboratoire d'Analyse Num\'{e}rique, B\^{a}timent 425,\\
%F-91405 Orsay Cedex, France}

\maketitle              % typeset the title of the contribution
\begin{abstract}
The uniform one-dimensional fragment of first-order logic, $\mathrm{U_1}$, is a
formalism that extends two-variable logic in a natural
way to contexts with relations of all arities. We survey properties of $\mathrm{U_1}$
and investigate its relationship to description logics designed to accommodate
higher arity relations, with particular attention given to  $\mathcal{DLR}_{\mathit{reg}}$.
We also define a description logic version of a variant of $\mathrm{U}_1$ and prove a
range of new results concerning the expressivity of $\mathrm{U}_1$ and
related logics.
%
\begin{comment}
%
Uniform one-dimensional fragment \UF{} is a formalism 
%is a recently introduced formalism
%
%that 
%canonical generalization of two-variable logic 
obtained from first-order logic by limiting 
quantification to applications of blocks of existential (universal)
quantifiers such that at most
one variable remains free in the quantified formula.
The fragment is closed under Boolean operations,
but additional restrictions (called uniformity conditions)
apply to combinations of atomic formulae with two or more variables.
\UF{} can be seen as a canonical generalization of two-variable logic,
defined in order to be able to deal with relations of arbitrary arities.
\UF{} was introduced recently, and it was shown that the satisfiability problem
of the equality-free fragment $\UF{}$ of $\UF{}$ is decidable.
In this article we establish that the satisfiability and finite satisfiability problems of 
$\UF{}$ are $\mathrm{NEXPTIME}$-complete. We also show that the corresponding
%
problems for the extension of $\UF{}$  with counting quantifiers
%
are undecidable.
%
In addition to decidability questions, we compare the expressivities of $\UF{}$ and two-variable logic
with counting quantifiers $\mathrm{FOC}^2$. 
We show that while the logics
are incomparable in general, $\UF{}$ is strictly contained in $\mathrm{FOC}^2$
when attention is restricted to vocabularies with the arity bound two.
%
\end{comment}

%\keywords{two-variable logics, description logics, higher arity relations,
%uniform one-dimensional fragments}
\end{abstract}

\section{Introduction}

\newcommand{\UFC}{\mbox{$\mbox{\rm UFC}_1^=$}}

%
%The \emph{guarded fragment}
%
%
%
%and \emph{two-variable logics} are currently the 
%
%two most widely studied fragments of first-order logic.
%
%in relation to the research programme on classical logics
%
%aiming to find decidable fragments of first-order logic.
%

Two-variable logic \cite{IEEEonedimensional:henkin,Sco62}
and the guarded fragment
\cite{IEEEonedimensional:andreka} are
currently perhaps the most widely
studied subsystems of first-order logic.
Two-variable logic $\mathrm{FO}^2$ 
%
%in \cite{IEEEonedimensional:henkin}
%
was proved decidable in \cite{IEEEonedimensional:mortimer},
%
%The equality-free two-variable
%
%logic was shown decidable already in and \cite{IEEEonedimensional:mortimer}.
%
and the satisfiability problem of 
$\mathrm{FO}^2$ was shown to
be $\mathrm{NEXPTIME}$-complete in \cite{GKV97}.
The extension of two-variable logic with counting quantifiers, $\mathrm{FOC}^2$,
was proved decidable in \cite{IEEEonedimensional:gradel,PST97}
and subsequently shown to be
$\mathrm{NEXPTIME}$-complete in \cite{IEEEonedimensional:pratth}.
Research on extensions and variants of
two-variable logic is currently very active.
Recent research has mainly concerned decidability
and complexity issues
in restriction to particular classes of 
structures and also questions
related to different built-in features and operators that
increase the expressivity of the base language.
Recent articles in the field include for example
\cite{IEEEonedimensional:kieronski,BMSS09,KMP-HT14,IEEEonedimensional:tendera}
%
%IEEEonedimensional:charatonic},
%
%\cite{IEEEonedimensional:kieronskimichaliszyn},
%
%\cite{KMP-HT14},
%
%\cite{IEEEonedimensional:zeume},
%
%\cite{IEEEonedimensional:tendera},
%
and several others.
%

%
%The guarded fragment $\mathrm{GF}$ was originally conceived in
%
%\cite{IEEEonedimensional:andreka}.
%

%
The guarded fragment was shown
$\mathrm{2EXPTIME}$-complete in \cite{graadejees}
and in fact $\mathrm{EXPTIME}$-complete over
bounded arity vocabularies in the same article.
The guarded fragment has since then generated a vast literature.
The fragment has recently been significantly
generalized in the article \cite{IEEEonedimensional:barany}
which introduces the
\emph{guarded negation first-order logic} $\mathrm{GNFO}$.
Intuitively, $\mathrm{GNFO}$ only allows
negations of formulae that are guarded
in the sense of the guarded fragment.
The guarded negation fragment
has been shown complete for $\mathrm{2EXPTIME}$
in \cite{IEEEonedimensional:barany}.
The article \cite{kuusistohella2014} introduced the
\emph{uniform one-dimensional fragment}, 
$\UF{}$, which is a natural generalization of $\mathrm{FO}^2$ to 
contexts with relations of arbitrary arities.
Intuitively, $\UF{}$ is a fragment of first-order logic obtained by restricting
quantification to blocks of existential (universal) quantifiers
that \emph{leave
at most one free variable} in the resulting formula.
Additionally, a \emph{uniformity condition} applies to the use of
atomic formulae: if $n,k\geq 2$, then a Boolean combination of
atoms $R(x_1,...,x_k)$ and $S(y_1,...,y_n)$
is allowed only if the sets $\{x_1,...,x_k\}$ and $\{y_1,...,y_n\}$ of
variables are equal.
Boolean combinations of formulae with at most one
free variable can be formed freely,
and the use of equality is unrestricted.
%
%\vspace{1cm}
%
Several variants of $\mathrm{U}_1$ have also
been investigated in \cite{kuusistohella2014}
and the two subsequent papers \cite{kuusistokieronski2014,kuusistokieronski2015}.
Perhaps the easiest way to  gain
intuitive insight on $\mathrm{U}_1$ is
to consider the \emph{fully uniform fragment}, $\mathrm{FU}_1$,
which is a slight restriction of $\mathrm{U}_1$
introduced in the current article.
It turns out that $\mathrm{FU}_1$ 
can be represented roughly as the
\emph{standard polyadic modal logic} 
where novel accessibility
relations can be formed by the Boolean combination
and permutation of atomic accessibility relations.
Recall that polyadic modal logic is the extension
of modal logic with formulae $\chi:=\langle R\rangle(\varphi_1,...,\varphi_k)$
interpreted such that $M,w\models \chi$
iff there exist points $u_1,...,u_k$ such that 
$(w,u_1,...,u_k)\in R$ and $M,u_i\models\varphi_i$ for each $i$.
It also turns out, as we shall see, that over vocabularies
with at most binary relations, $\mathrm{FU}_1$ is in fact
\emph{equi-expressive} with $\mathrm{FO}^2$.
This result extends a similar observation from
\cite{twovariablemodal} concerning \emph{Boolean
modal logic} with the inverse operator and a
built-in identity modality.
It was proved in \cite{twovariablemodal} that this logic is
expressively complete for $\mathrm{FO}^2$. 
The fact that $\mathrm{FU}_1$ collapses to $\mathrm{FO}^2$
over binary vocabularies can be taken to indicate that $\mathrm{FU}_1$ is a
natural and \emph{in some sense} minimal generalization of $\mathrm{FO}^2$
to higher arity contexts.
The uniform one-dimensional fragment $\mathrm{U}_1$
was shown to have the finite model property and a
$\mathrm{NEXPTIME}$-complete decision problem in
\cite{kuusistokieronski2014}, thereby establishing that
the transition from $\mathrm{FO}^2$ to $\mathrm{U}_1$
comes without a cost in complexity. It was also
shown in \cite{kuusistokieronski2014} that $\mathrm{U}_1$ is
incomparable in expressivity with $\mathrm{FOC}^2$;
we will prove in the current article that $\mathrm{U}_1$ is
incomparable with $\mathrm{GNFO}$, too.
We note, however, that the article \cite{kuusistohella2014} already established a
similar incomparability result concerning $\mathrm{GNFO}$
and the \emph{equality-free} fragment of $\mathrm{U}_1$.
The article \cite{kuusistokieronski2014} 
also showed that the extension of $\mathrm{U}_1$
with counting quantifiers is undecidable. The article \cite{kuusistohella2014}, in
turn, established that relaxing either of the 
two principal constraints of the 
syntax of $\mathrm{U}_1$-formulae---leaving two free
variables after quantification or violating the
uniformity condition---leads to undecidability.
Building on \cite{kuusistohella2014}
and \cite{kuusistokieronski2014}, the article \cite{kuusistokieronski2015} investigated
variants of $\mathrm{U}_1$ in the presence of
built-in equivalence relations. It was shown, e.g.,
that while $\mathrm{U}_1$ 
becomes $\mathrm{2NEXPTIME}$-complete
when a built-in equivalence is added, a certain
natural restriction of $\mathrm{U}_1$ (which still contains $\mathrm{FO}^2$)
remains $\mathrm{NEXPTIME}$-complete.
In the current article we briefly discuss the above
collection of results on $\mathrm{U}_1$ and its variants and
list a number of related open problems.

%

%
%Barany \cite{IEEEonedimensional:barany}.
%

%
Unlike the guarded fragment and $\mathrm{GNFO}$,
two-variable logic does not cope well
with relations of arities greater than two,
and the same applies to $\mathrm{FOC}^2$.
In database theory
contexts, for example, this can be a major drawback.
Therefore the scope of research on
two-variable logics is significantly restricted.
The uniform one-dimensional fragment $\mathrm{U}_1$
extends two-variable logics in a way that leads to the
possibility of investigating systems with relations of all arities.
%
%This is one on the main \emph{selling points} of $\mathrm{U}_1$.
%

%
Another possible advantage of $\mathrm{U}_1$ is its
one-dimensionality, i.e., the fact that its formulae
are \emph{essentially} of the type $\varphi(x)$, where $x$ is a
free variable. This  links $\mathrm{U}_1$ to
description logics in a natural way, as formulae of $\mathrm{U}_1$
can be regarded as \emph{concepts} in the description logic sense.
Below we make use of this issue and define a
description logic $\mathcal{DL}_{\mathrm{FU}_1}$,
which we prove to be expressively equivalent to the fully
uniform one-dimensional fragment $\mathrm{FU}_1$. 
The logic $\mathcal{DL}_{\mathrm{FU}_1}$ makes
explicit the link between $\mathrm{FU}_1$ and 
polyadic modal logic we mentioned above. It
can be seen as \emph{the canonical extension} of
the description logic $\mathcal{ALBO}^{\mathit{id}}$ \cite{renate}
to higher arity contexts. While $\mathcal{ALBO}^{\mathit{id}}$
is $\mathcal{ALC}$ extended with Boolean and inverse operators on roles,
an identity role and singleton
concepts, $\mathcal{DL}_{\mathrm{FU}_1}$ is essentially the same
system with roles of all arities. The relational inverse operator is generalized to an
operator that slightly generalizes the relational permutation operator.
%
\begin{comment}
%
A notably advantageous feature of $\mathcal{DL}_{\mathrm{FU}_1}$ is that it is
%
technically based on the same approach as standard polyadic modal logics,
%
thereby extending the celebrated link between modal and description logics to 
%
higher arity contexts in a way that \emph{preserves the close relationship} 
%
between the two fields.
%
\end{comment}
%

%
Higher arity relations arise naturally in contexts
relevant to description logics. Consider for example
the ternary role $R$ such that $R(a,b,c)$ iff $a$ has
contracted a virus $b$ in country $c$, or the quaternary
role $S$ such that $S(c,d,e,f)$ iff $c$ and $d$ have sold $e$ to $f$.
It is easy to see by a counting argument that a $k$-ary relation cannot be encoded by a finite
number of relations of lower arity without changing the domain,
and therefore---in addition to aesthetic considerations---a direct access to higher arity
roles can be advantageous.
Higher arity roles have of course been investigated before in
the description logic literature, for example in
\cite{calvanese,carsten,Schmolze}.
%
%Indeed, it is
%
%not difficult to invent examples of natural
%
%ternary roles. In this article we 
%
Below we compare $\mathcal{DL}_{\mathrm{FU}_1}$
and the system $\mathcal{DLR}_{\mathit{reg}}$
from \cite{calvanese}, which includes, e.g.,
the union, composition and transitive
reflexive closure operators for binary 
roles as well as operators that 
enable the creation of binary relations from
higher arity roles.
%
%(see the main sections below
%
%for further details).
%
We show that $\mathcal{DL}_{\mathrm{FU}_1}$
and $\mathcal{DLR}_{\mathit{reg}}$ are
incomparable in expressivity. While this result itself is
not at all surprizing, it is still worth proving since
the related arguments directly demonstrate
the relative expressivities
of $\mathcal{DLR}_{\mathit{reg}}$ and $\mathcal{DL}_{\mathrm{FU}_1}$.
%
%the novel Ehrenfeucht-Fra\"{i}ss\'{e} (EF) game for $\mathrm{U}_1$
%
%from \cite{kuusistokieronski2015} and thereby demonstrate 
%
%the expressivity properties of $\mathrm{U}_1$.
%
%We use the EF-game also for showing that $\mathrm{U}_1$
%
%and $\mathrm{GNFO}$ are incomparable in expressivity,
%
%which has not been proved in full detail before.
%
We end the article by identifying a fragment of $\mathcal{DLR}_{\mathit{reg}}$
which is in a certain sense maximal with the property that it
embeds into $\mathcal{DL}_{\mathrm{FU}_1}$. In the context of
this investigation we discuss the curious fact that while $\mathrm{U}_1$
can count, it cannot count well enough to express the
number restriction operators of $\mathcal{DLR}_{\mathit{reg}}$.
In the investigations below concerning expressivity issues, we
make occasional use of the
novel Ehrenfeucht-Fra\"{i}ss\'{e} (EF) game for $\mathrm{U}_1$
from \cite{kuusistokieronski2015}. The related
concrete arguments shed light on the expressivity properties of
%
%the uniform one-dimensional fragment
%
$\mathrm{U}_1$.
% 

%
%Finally, it is worth pointing out here that the technical
%
%similarity of $\mathcal{DL}_{\mathrm{FU}_1}$ and
%
%standard polyadic modal logic is a rather nice and 
%
%potentially fruitful feature of $\mathcal{DL}_{\mathrm{FU}_1}$.
%
Finally, it is worth pointing out here that a rather nice and
potentially fruitful feature of $\mathcal{DL}_{\mathrm{FU}_1}$ is that it is
based on the syntactically and semantically same approach as standard polyadic modal logic.
Thereby $\mathcal{DL}_{\mathrm{FU}_1}$
extends the celebrated and fruitful link between modal and description logics to 
higher arity contexts in a way that \emph{preserves the close relationship} 
between the two fields.

\section{Preliminaries}
We let $\mathrm{VAR}$ denote a countably infinite
set of variable symbols.
Let $X = \{x_1,...,x_k\}$ be a finite set of
variable symbols and let $R$ be an $n$-ary
relation symbol; $R$ is not allowed to be the identity symbol here.
%
%where $k\geq 2$.
%
An atomic formula $R(x_{i_1},...,x_{i_n})$ is called an \emph{$X$-atom}
if $\{x_{i_1},...,x_{i_n}\} = X$.
For example, assuming $x,y,z$ to be
distinct variables, both $S(x,y)$ and $T(x,x,y,y,x)$ are $\{x,y\}$-atoms
while $P(x)$ and $R(x,y,z)$ are not.
Let $\mathbb{Z}_+$ be the set of positive integers.
We let $V$ denote the
infinite relational vocabulary
$V := \bigcup_{k\, \in\, \mathbb{Z}_+} \tau_k$,
where $\tau_k$ is a countably infinite
set of $k$-ary relation symbols; the equality symbol is not in $V$.
A \emph{unary} $V$-atom is an atomic 
formula of the form $P(x)$ or $R(x,...,x)$, where $P,R\in V$.
Here $(x,...,x)$ denotes the tuple that repeats $x$ 
exactly $n$ times, $n$ being the arity of $R$.
%
%\textcolor{red}{Every vocabulary $\tau$ we consider below
%
%is assumed to be a subset of $\mathcal{V}$.}
%
%A $\tau$-formula of first-order logic is a formula whose set of non-logical
%
%symbols is a subset of $\tau$. A $\tau$-model
%
%is a model whose set of interpreted non-logical
%
%symbols is $\tau$.
%
\begin{comment}
%
A \emph{$k$-ary $\tau$-atom} is an atomic $\tau$-formula $\psi$
%
such that $|\mathit{free}(\psi)| = k$. For example, if $P\in\tau$
%
is a unary and $R\in\tau$ a binary symbol, then 
%
$P(x)$, $x=x$, $R(x,x)$ are unary $\tau$-atoms, and
%
$R(v_1,v_2)$, $v_1=v_2$ are binary $\tau$-atoms.
%
When $\tau$ is clear
%
from the context, we may simply
%
talk about $k$-ary atoms.
%
\end{comment}
%

%
The set of formulae of the
\emph{equality-free
uniform one-dimensional fragment} $\mathrm{U}_1({\mathit{wo}=})$
of first-order logic is 
the smallest set $\mathcal{F}$ satisfying the following conditions
(cf. \cite{kuusistohella2014}).
\begin{enumerate}
\item
Every unary $V$-atom is in $\mathcal{F}$.
Also $\bot,\top\in \mathcal{F}$.
%
%\item
%Every identity atom $x=y$ is in $\mathcal{F}$.
%
\item
If $\varphi\in \mathcal{F}$, then $\neg\varphi\in\mathcal{F}$.
\item
If $\varphi,\psi\in \mathcal{F}$,
then $(\varphi\wedge\psi)\in \mathcal{F}$.
\item
Let $Y := \{x_0,...,x_k\}\subseteq\mathrm{VAR}$ and $X\subseteq Y$.
Let $\varphi$ be a Boolean combination of {$X$-atoms} over $V$
and {formulae in $\mathcal{F}$ whose free variables (if any) are in $Y$}.
Then
$\exists x_1...\exists x_k\, \varphi\ \in\ \mathcal{F}$ and 
$\exists x_0...\exists x_k\, \varphi\ \in\ \mathcal{F}$.
%
%\begin{center}
%
%\begin{enumerate}
%
\end{enumerate}
%
%\end{center}
%
%\end{enumerate}
%
%
%
\begin{comment}
%
\begin{enumerate}
%
\item
Every unary $\tau$-atom is in $\mathcal{F}$.
%
Also $\bot,\top\in \mathcal{F}$.
%
\item
Every identity atom $x=y$ is in $\mathcal{F}$.
%
\item
If $\varphi\in \mathcal{F}$, then $\neg\varphi\in\mathcal{F}$.
%
If $\varphi_1,\varphi_2\in \mathcal{F}$,
%
then $(\varphi_1\wedge\varphi_2)\in \mathcal{F}$.
%
\item
Let $X = \{x_0,...,x_k\}\subseteq\mathrm{VAR}$.
%
Let $U$ be a finite set of
%
formulae $\psi\in\mathcal{F}$ whose
%
free variables are in $X$.
%
Let $V\subseteq X$.
%
Let $F$ be a $V$-uniform set of $\tau$-atoms.
%
Let $\varphi$ be any Boolean combination of formulae in $U\cup F$.
%
Then $\exists x_1...\exists x_k\, \varphi\, \in \mathcal{F}$ and
%
$\exists x_0...\exists x_k\, \varphi\, \in \mathcal{F}$.
%\item
%If $\varphi\in\mathcal{F}$,
%
%has the free variable $y$,
%
%then $\exists x\, \varphi\, \in\, \mathcal{F}$.
%
\end{enumerate}
%
\end{comment}
%
%
%
For example \[\exists y\exists z((\neg R(x,y,z)
\vee T(z,y,x,x)) \wedge  P(z))\] is a
$\mathrm{U}_1(\mathit{wo}=)$-formula,
while \[\exists y\exists z(S(x,y)\wedge S(y,z)\wedge P(z))\] is not
because $\{x,y\}\not=\{y,z\}$. This latter formula is
said to \emph{violate the uniformity condition} of $\mathrm{U}_1$.
The formula $\exists y R(x,y,z)$ is also illegitimate
because it \emph{violates one-dimensionality},
leaving two variables free instead of one. However,
the sentence $\exists x\exists z\exists y R(x,y,z)$ is legitimate, 
and so is \[\forall x\exists z\exists y(R(x,y,z)\wedge \exists u\neg U(y,u)),\]
while the sentence $\forall x\forall z\exists y R(x,y,z)$ is not.
%
%Note that strictly speaking formulae of $\mathrm{U}_1(\mathit{wo}=)$ may have 
%
%two or more free variables as in $P(x)\wedge Q(y)$, but
%
%formulae with several free variables are always simply Boolean combinations of
%
%formulae with at most one free variable.
%

%
The \emph{fully uniform one-dimensional fragment} $\mathrm{FU}_1$ is
the logic whose formulae are obtained from
formulae of $\mathrm{U}_1(\mathit{wo}=)$ by allowing the
free substitution of any collection of \emph{binary} relation symbols
by the equality symbol $=$. The
\emph{uniform one-dimensional fragment} $\mathrm{U}_1$ is
obtained by adding to the above four clauses that 
define the set $\mathcal{F}$ of formulae of $\mathrm{U}_1(\mathit{wo} =)$
the additional clause $x=y\ \in\ \mathcal{F}$.
For example \[\exists y\exists z( R(y,z,x) \wedge x\not= y
\wedge \exists z S(y,z))\] is a formula of $\mathrm{U}_1$
but not of $\mathrm{FU}_1$.
Clearly $\mathrm{FU}_1$ is a fragment of $\mathrm{U}_1$.
The following proposition, where $\mathrm{FO}^2$
denotes  two-variable logic with equality, is easy to  prove
using disjunctive normal form representations of formulae.
%
%(Note that strictly speaking for example $P(x)\wedge Q(y)$ is a formula of 
%
%$\mathrm{FU}_1$ even though it has two free variables, and similarly,
%
%$\mathrm{FU}_1$ contains sentences even
%
%though sentences have no free variables.)
%

%
\begin{proposition}\label{firstproof}
$\mathrm{FU}_1$ and $\mathrm{FO}^2$
are equi-expressive over models with at
most binary relations. That is, in restriction to
models with relations of arity at most two, each
formula of\,  $\mathrm{FU}_1$ with at most
two free variables has an equivalent $\mathrm{FO}^2$-formula,
and each $\mathrm{FO}^2$-formula has an equivalent
$\mathrm{FU}_1$-formula.
\end{proposition}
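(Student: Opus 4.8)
The plan is to establish the two directions separately, in each case by induction on the structure of formulae and using disjunctive normal form (DNF) to normalise Boolean combinations, exactly as the paper suggests. Throughout I would use the basic structural fact, read off from clause~4 of the definition, that every formula of $\mathrm{FU}_1$ (indeed of $\mathrm{U}_1(\mathit{wo}=)$) has at most one free variable: quantifying $x_1,\dots,x_k$ leaves only $x_0$ free, and quantifying $x_0,\dots,x_k$ yields a sentence. By this one-dimensionality the natural unit of comparison is a formula with at most one free variable, so for the direction starting from $\mathrm{FO}^2$ I would restrict attention to $\mathrm{FO}^2$-formulae with at most one free variable, handling two-free-variable subformulae only at the moment they fall under a quantifier.

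For the direction from $\mathrm{FU}_1$ to $\mathrm{FO}^2$ I would induct on the construction of the set $\mathcal{F}$. The Boolean clauses and the unary-atom base case are immediate, the inductive hypothesis supplying $\mathrm{FO}^2$-equivalents that are recombined by the same connectives; equality atoms arising from substituting binary symbols by $=$ are available verbatim in $\mathrm{FO}^2$. The only substantial case is a formula $\exists x_1\cdots\exists x_k\,\varphi$ of clause~4 (the sentence variant $\exists x_0\cdots\exists x_k\,\varphi$ being identical, with $x_0$ also bound). Here I would put $\varphi$ into DNF and distribute the quantifier block over the disjunction, so that it suffices to treat a single conjunction $\delta$ of literals. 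In $\delta$ the uniformity condition forces every non-unary atom to be an $X$-atom for one fixed $X$, and since the vocabulary is binary $|X|\leq 2$; moreover every non-atomic conjunct is an $\mathcal{F}$-formula with at most one free variable. I would then group the conjuncts by the variable they mention, collecting all $\mathcal{F}$-conjuncts with single free variable $x_\ell$ into one formula $\gamma_\ell(x_\ell)$, to which the inductive hypothesis applies. Conjuncts whose only variable is a quantified variable outside $X$ become sentences $\exists x_\ell\,\gamma_\ell(x_\ell)$ sharing no variable with the rest, hence factor out of the block; what remains mentions only the variables of $X$ together with the free variable $x_0$, i.e.\ at most two variables up to renaming, and so is an $\mathrm{FO}^2$-formula. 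Conjoining the factored-out sentences keeps us within $\mathrm{FO}^2$.

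For the converse, from $\mathrm{FO}^2$ to $\mathrm{FU}_1$, I would induct on the number of symbols of an $\mathrm{FO}^2$-formula $\chi$ with at most one free variable, rewriting $\forall$ as $\neg\exists\neg$. A quantifier-free $\chi$ is a Boolean combination of unary atoms and of $x=x\equiv\top$, hence already $\mathrm{FU}_1$, and a top-level Boolean combination is handled by recursion, since each conjunct again has at most one free variable. The key case is $\chi=\exists z\,\rho$ with $\rho$ having free variables among the two variables $\{x,y\}$. I would decompose $\rho$ as a Boolean combination of binary atoms and equalities over $\{x,y\}$, of maximal subformulae beginning with a quantifier, and of unary atoms. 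Every non-unary atom is then automatically an $\{x,y\}$-atom, so uniformity is met; each quantified subformula has at most one free variable and, being shorter than $\chi$, is by the inductive hypothesis equivalent to an $\mathrm{FU}_1$-formula with at most one free variable; the unary atoms are $\mathrm{FU}_1$-formulae too. This is precisely the shape required by clause~4, so $\exists z\,\rho$ is $\mathrm{FU}_1$, with equalities $x=y$ produced by substituting a binary symbol by $=$ in the corresponding $X$-atom, as permitted in $\mathrm{FU}_1$.

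The Boolean and base cases are routine; the real work, and the main obstacle, is the free-variable bookkeeping in the direction from $\mathrm{FU}_1$ to $\mathrm{FO}^2$, where one must verify that after passing to DNF the combination of uniformity (a single common variable set $X$, of size $\leq 2$) and one-dimensionality (each non-atomic conjunct contributing a single free variable) really does allow the long block $\exists x_1\cdots\exists x_k$ to be factored into nested quantification over at most two variables. The degenerate configurations---where $x_0$ itself lies in $X$, where $|X|\leq 1$, or where a conjunct is a sentence---must be checked, but each reduces to the generic case and introduces no new difficulty.
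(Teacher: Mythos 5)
Your proposal is correct and follows essentially the same route as the paper's own argument: for the $\mathrm{FU}_1\to\mathrm{FO}^2$ direction, putting the matrix of a quantifier block into disjunctive normal form, distributing the block over the disjuncts, and factoring each conjunct into a two-variable piece over the uniform set $X$ plus detachable one-variable sentences; and for the converse, observing that over binary vocabularies every $\mathrm{FO}^2$-formula already has the shape demanded by clause~4. Your handling of the converse by induction with explicit free-variable bookkeeping is, if anything, slightly more careful than the paper's bare remark that $\mathrm{FO}^2$ is a syntactic fragment of $\mathrm{FU}_1$.
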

%
%
%
%\begin{proof}
%
%Straightforward.
%
%\end{proof}
%
%
%
However, $\mathrm{U}_1$ is strictly more expressive
than two-variable logic $\mathrm{FO}^2$ even over the empty vocabulary,
because $\mathrm{U}_1$ can count
better than $\mathrm{FO}^2$: we observe that for example
the sentence
\[\exists x\exists y \exists z(x\not=y\wedge x\not= z
\wedge y\not = z)\]
is a $\mathrm{U}_1$-formula.
It is well known and easy to  show by a  two-pebble-game argument
(see \cite{IEEEonedimensional:libkin}  for  pebble games)
that this sentence is not expressible in $\mathrm{FO}^2$.
It is easy to see that $\mathrm{FO}^2$ and
therefore $\mathrm{FU}_1$ can define
the property that $|P| = 1$ for a unary predicate $P$.
Thus \emph{nominals} can be simulated in those logics.
The logic $\mathrm{U}_1$ can 
define even the properties $| P |\leq k$, $| P |\geq k$
and  $| P |= k$ for any finite $k$.
However, the counting capacity of $\mathrm{U}_1$ is
restricted in an interesting way, as we will see later on; $\mathrm{U}_1$
cannot make counting statements about the in-degrees and
out-degrees of binary relations.
Finally, the $\mathrm{U}_1$-sentence \[\exists x\forall y\forall z( R(y,z)
\rightarrow (x=y\vee x=z))\] provides a possibly more
interesting example of what is definable in $\mathrm{U}_1$
but not in $\mathrm{FO}^2$. This sentence states that there is an
element that belongs to every edge of $R$. It is easy to
see by a two-pebble-game argument that this property is
not expressible in $\mathrm{FO}^2$: the \emph{Duplicator} wins the 
two-pebble-game played on $K_2$ and $K_3$, where $K_n$ is
the $n$-clique. Recall that the $n$-clique is
the structure with $n$ elements where $R$ is  the total 
binary relation with the reflexive loops  removed.
\section{Complexity of $\mathrm{U}_1$ and its variants}
The complexity of $\mathrm{U}_1$ 
was identified in \cite{kuusistokieronski2014} by
showing that the logic has the
exponential model property.
\begin{theorem}[\cite{kuusistokieronski2014}]\label{expmodelproperty}
%\UF{} has the finite model property. Moreover, 
Every satisfiable \UF{}-formula
$\phi$ has a model whose size is bounded exponentially in $|\phi|$. 
\end{theorem}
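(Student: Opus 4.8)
The plan is to prove an exponential-size model property by the type-elimination and witness-selection method familiar from the small model construction for $\FOt$ in \cite{GKV97}, arranged so that the only exponentially large parameter is the number of $1$-types, while all higher-arity information is kept under control by the uniformity condition. First I would bring $\phi$ into a Scott-style normal form: introducing a fresh unary predicate to name each subformula governed by a quantifier block, I obtain over an extended vocabulary a satisfiability-equivalent sentence $\phi'$ that is a conjunction of universal conjuncts $\forall x_1\cdots\forall x_m\,\alpha$ and existential requirements $\forall x\,\exists y_1\cdots\exists y_l\,\beta$, where $\alpha$ and $\beta$ are quantifier-free Boolean combinations of $X$-atoms (for the appropriate variable set $X$) and unary atoms, still one-dimensional and uniform. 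This renaming is satisfiability-preserving in both directions and does not change domain sizes, and it keeps $|\phi'|$, the number of new predicates, the maximal arity $k$, and the number of conjuncts all polynomial in $|\phi|$; hence an exponential model for $\phi'$ yields one for $\phi$.

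Next I would fix a model $\mathfrak{M}\models\phi'$ and work with $1$-types over the extended unary vocabulary, i.e.\ maximal consistent sets of unary atoms. The number of realized $1$-types is at most $2^{q(|\phi|)}$ for a polynomial $q$, and this single exponential is the sole source of the bound. I build a substructure whose domain consists of a bounded number of representatives of each realized $1$-type, together with, for every representative $a$ and every existential requirement, a fresh gadget of at most $k$ witnesses chosen so as to satisfy that requirement for $a$. The resulting domain has at most (number of realized types) $\times$ (number of requirements) $\times\,k$ elements, hence is exponentially bounded.

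The heart of the argument is defining the relations on this domain so that every universal conjunct holds while every existential requirement is witnessed, and here the uniformity condition does the decisive work. Because any Boolean combination of non-unary atoms occurring in $\alpha$ or $\beta$ is built from $X$-atoms for a single set $X$, the truth of each universal conjunct on a tuple depends only on the atomic table of the set of pairwise-distinct entries of that tuple, a set of size at most $k$. I would therefore copy onto each witness gadget, together with its source element, the atomic table that the corresponding configuration had in $\mathfrak{M}$ (thereby meeting the existential requirements), and, for every remaining set of elements drawn from distinct gadgets, install a \emph{safe} atomic table, namely one realized in $\mathfrak{M}$ by a tuple with the same sequence of $1$-types and the same coincidence pattern. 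Since $\mathfrak{M}\models\phi'$ and uniformity makes the universal conjuncts sensitive only to such tables, every copied table satisfies them.

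The step I expect to be the main obstacle is the soundness of this last copying: to install a safe table on a cross-gadget set of $m\le k$ distinct elements with $1$-types $\pi_1,\dots,\pi_m$, one needs a source tuple in $\mathfrak{M}$ with exactly that sequence of types and exactly that pattern of equalities, yet a $1$-type realized only a few times in $\mathfrak{M}$ may fail to supply distinct elements in the required multiplicity (this is precisely where equality in $\mathrm{U}_1$ enters, through the coincidence pattern). The standard remedy, which I would carry out, is a king/court split: keep every element of each \emph{rare} $1$-type with its mutual configuration preserved exactly from $\mathfrak{M}$, and take enough copies of each \emph{abundant} type that all needed coincidence patterns among distinct elements are available as sources. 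Checking that this split keeps the domain exponentially bounded and that the two regimes glue together consistently, again using uniformity to localize each constraint to a bounded atomic table, is the technical crux; the rest is bookkeeping.
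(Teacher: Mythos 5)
Your overall strategy---a Scott-style normal form over an extended unary signature, $1$-types as the sole exponential parameter, witness gadgets whose higher-arity tables are copied from the original model, ``safe'' tables on all remaining variable sets, and a king/court split to cope with equality---is exactly the route the paper attributes to \cite{kuusistokieronski2014}: a generalized Scott normal form for \UF{} followed by an adaptation of the Gr\"{a}del--Kolaitis--Vardi construction of \cite{GKV97}. (The paper itself only cites the result and does not reproduce the proof.) You also correctly isolate the two places where uniformity is load-bearing: each conjunct is sensitive only to the atomic table of a single variable set together with the $1$-types, and tables on distinct variable sets are mutually unconstrained, so cross-gadget tables can be installed independently.

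There is, however, one concrete gap in the construction as you describe it. Your domain consists of type representatives plus, for each representative and each existential conjunct, a \emph{fresh} gadget of witnesses; but those gadget elements are themselves subject to every conjunct of the form $\forall x\,\exists y_1\cdots\exists y_l\,\beta$, and nothing in your description supplies witnesses for \emph{them}. Iterating ``add gadgets for the gadgets'' does not terminate, so the structure you build, as written, fails the existential requirements at every gadget element. The standard repair, which you need to make explicit, is the cyclic (layered) court of \cite{GKV97}: take a fixed small number of copies of the court (three suffices for $\mathrm{FO}^2$; a constant depending on the width of the quantifier blocks in the \UF{} setting), let every element of copy $i$ draw its witness gadget from copy $i+1$ modulo the number of copies, and let elements of rare types (kings) be served wherever needed. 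Only then does every element of the final structure have witnesses while the domain remains exponentially bounded. With that amendment the argument goes through and matches the cited proof.
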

\begin{theorem}[\cite{kuusistokieronski2014}]\label{complexitytheorem}
The satisfiability problem (=\hspace{0.6mm}finite satisfiability problem) for \UF{} is $\mathrm{NEXPTIME}$-complete.
\end{theorem}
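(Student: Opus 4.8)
The plan is to establish membership in $\mathrm{NEXPTIME}$ and $\mathrm{NEXPTIME}$-hardness separately, with the membership part resting entirely on the exponential model property of Theorem~\ref{expmodelproperty}. The stated equality of the two decision problems is then immediate: the exponential model property guarantees that any satisfiable formula has a finite model, so satisfiability and finite satisfiability coincide, and it suffices to analyse either one.

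For the upper bound I would use a standard guess-and-check procedure. Given an input $\phi$, only the finitely many relation symbols actually occurring in $\phi$ are relevant, and by Theorem~\ref{expmodelproperty} it is enough to search for a model of size at most $2^{p(|\phi|)}$ for a fixed polynomial $p$. The algorithm nondeterministically guesses such a structure $\mathfrak{A}$ and then verifies $\mathfrak{A}\models\phi$ by model checking. The one point that needs care is the presence of relations of unbounded arity: if $R$ has arity $m$, then $R^{\mathfrak{A}}$ may contain up to $N^m$ tuples, where $N=2^{p(|\phi|)}$. Since every arity occurring in $\phi$ is bounded by $|\phi|$, this is still only $2^{O(|\phi|\cdot p(|\phi|))}$, so both the representation of $\mathfrak{A}$ and the guessing of it stay within exponential size. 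Likewise, model checking a quantifier block $\exists x_1\dots\exists x_k\,\varphi$ ranges over at most $N^k$ tuples with $k\leq|\phi|$, which is again exponential; a bottom-up evaluation over the at most $|\phi|$ subformulae then keeps the whole verification in deterministic exponential time. Hence the procedure runs in $\mathrm{NEXPTIME}$.

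For the lower bound I would reduce from the satisfiability problem of $\mathrm{FO}^2$ over binary vocabularies, which is $\mathrm{NEXPTIME}$-hard by \cite{GKV97}. First convert the given $\mathrm{FO}^2$ sentence to Scott normal form $\forall x\forall y\,\psi \wedge \bigwedge_i \forall x\exists y\,\psi_i$ in polynomial time, with $\psi,\psi_i$ quantifier-free over a binary vocabulary. I would then observe that each conjunct, after rewriting universal quantifiers as negated existentials, is already a legitimate $\mathrm{U}_1$-sentence: the matrices $\psi,\psi_i$ are Boolean combinations of $\{x,y\}$-atoms together with unary atoms such as $P(x)$ and $R(x,x)$ and equalities $x=y$, and since unary formulae and equalities may be combined freely with $\{x,y\}$-atoms, the uniformity and one-dimensionality constraints of $\mathrm{U}_1$ are satisfied. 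Note that I deliberately route the hardness through this explicit normal-form argument rather than through Proposition~\ref{firstproof}, since the disjunctive-normal-form translation underlying the latter need not be polynomial. This yields a polynomial-time reduction and hence $\mathrm{NEXPTIME}$-hardness, and combining the two bounds gives $\mathrm{NEXPTIME}$-completeness.

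I expect the genuinely hard work to lie entirely in Theorem~\ref{expmodelproperty}, on which the upper bound depends; relative to that result the present statement is essentially routine. The main thing one must be careful about is precisely the arbitrary-arity blow-up in the guess-and-check step, which must be checked to remain exponential rather than being taken for granted as in the fixed-arity $\mathrm{FO}^2$ setting.
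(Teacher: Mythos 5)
Your derivation is correct and matches the route the paper attributes to \cite{kuusistokieronski2014}: the paper itself gives no proof of this theorem, only citing it and noting that the upper bound follows from the exponential model property of Theorem~\ref{expmodelproperty} (via a Scott-normal-form argument in the cited source), while hardness is inherited from $\mathrm{FO}^2$, which over at-most-binary vocabularies is essentially a syntactic fragment of $\mathrm{U}_1$, so your Scott-normal-form detour for the lower bound is sound but not even necessary. Your two points of care --- the unbounded-arity blow-up in the guess-and-check step and the observation that the exponential model property collapses satisfiability and finite satisfiability --- are exactly the right ones, and the genuinely hard content does indeed live entirely in Theorem~\ref{expmodelproperty}.
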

%\begin{proof}
%See the appendix.\qed
%\end{proof}
%%%E: edits end here
%
%
%
The argument in \cite{kuusistokieronski2014} leading to the above
results bears \emph{at least some degree} of
resemblance to the $\mathrm{NEXPTIME}$
upper bound proof of $\mathrm{FO}^2$ by
Gr\"{a}del, Kolaitis and Vardi in
\cite{GKV97}.
It turns out that $\mathrm{U}_1$-formulae can be transferred
into equisatisfiable formulae in a generalized version of the
\emph{Scott normal form} specially designed for $\mathrm{U}_1$, and
the exponential model property can then be established by
appropriately modifying and extending the arguments applied in 
\cite{GKV97}.
%
%Formulae of
%
%$\mathrm{U}_1$ can be transferred into equisatisfiable formulae in
%
%(generalized) Scott-normal form, and formulae in this normal form
%

%
The complexity results of
the article \cite{kuusistokieronski2014} were 
extended in \cite{kuusistokieronski2015}.
If $L$ denotes a fragment of first-order logic
and $R_1,...,R_k$ are binary relation
symbols, then we let $L(R_1,...,R_k)$ denote the language
obtained by allowing  for the free substitution of identity symbols in
$L$-formulae by the special symbols $R_i$.
%
%The language is interpreted over the class of models where $\sim$ is a
%
%built-in equivalence relation.
%
The article \cite{kuusistokieronski2015} 
investigated $\mathrm{U}_1$ and  its
variants over models with a built-in equivalence
relation $\sim$. It was shown that
the satisfiability ($\mathrm{SAT}$) and
finite satisfiability ($\mathrm{FINSAT}$) problems for $\mathrm{U}_1(\sim)$
are complete for $\mathrm{2NEXPTIME}$.
The article \cite{kuusistokieronski2015} also
identified a natural restriction $\mathrm{SU}_1$
of $\mathrm{U}_1$ that still extends $\mathrm{FO}^2$
and showed that the $\mathrm{SAT}$
and $\mathrm{FINSAT}$ problems for $\mathrm{SU}_1(\sim)$
are only $\mathrm{NEXPTIME}$-complete;
see \cite{kuusistokieronski2015} for the formal definition of $\mathrm{SU}_1$.
Furthermore, the article \cite{kuusistokieronski2015}
established that the $\mathrm{SAT}$ 
and $\mathrm{FINSAT}$-problems of $\mathrm{SU}_1(\sim_1,\sim_2)$,
i.e., $\mathrm{SU}_1$  
with  two built-in equivalences, is undecidable.
This contrasts with the case for $\mathrm{FO}^2$
which remains decidable
with two equivalences ($\mathrm{SAT}$ 
\cite{kierohullumpi} and $\mathrm{FINSAT}$
\cite{kierohullu}).
Several immediately interesting open problems
remain, for example the decidability issue for $\mathrm{U}_1(\leq)$,
where $\leq$ denotes  a built-in linear order.
Also, while $\mathrm{U}_1(\mathit{tr})$
(i.e., $\mathrm{U}_1$ with a
built-in transitive relation $\mathit{tr}$) was shown
undecidable in \cite{kuusistokieronski2015}, it was left open 
whether $\mathrm{U}_1(\mathit{tr}({\mathit{uniform)})}$ is
decidable; here $\mathrm{U}_1(\mathit{tr}({\mathit{uniform)})}$
denotes the language obtained from $\mathrm{U}_1$ by
allowing the free substitution of any instances of a
binary relation (rather than the equality symbol) by
the built-in transitive relation $\mathit{tr}$.
%
%%%%%%%%%%%%%%%%%%%%%%%%%%%%%%%%%%%%%%%%

.

\section{Expressivity issues}\label{expressivitysection}
In this section we provide an overview on the
expressivity of $\UF{}$ and its variants.
%
\begin{comment}
%
We first observe that \UF{} contains \FOt{}
%
simply because \FOt{} is a syntactic fragment of $\UF{}$.
%
It is easy to see that the containment is strict; a
%
straightforward 
%
pebble game argument
%
shows that the $\UF{}$-sentence $\exists x\exists y \exists z(\, x\not=y
%
\wedge x\not= z\wedge x\not= z\, )$ cannot be
%
expressed in $\mathrm{FO}^2$. (See \cite{LIBKIN04} for
%
pebble games)
%
\end{comment}
%
The following theorem from \cite{kuusistokieronski2014}
relates the expressivities of $\UF{}$ and $\mathrm{FOC}^2$.
%

%
%(see the appendix for the proof).
%

\begin{theorem}[\cite{kuusistokieronski2014}]\label{generalexpressivity}
$\UF{}$ and $\mathrm{FOC}^2$ are incomparable in expressivity.
%
%This holds already in the finite.
%
\end{theorem}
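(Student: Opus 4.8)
The plan is to prove incomparability by separating the two logics in each direction, in each case exhibiting one concrete property and a pair of structures that witnesses inexpressibility. For the direction $\mathrm{U}_1\not\leq\mathrm{FOC}^2$ I would exploit that $\mathrm{FOC}^2$ has only two variables and so cannot inspect a ternary relation on a tuple of three pairwise distinct elements. For the converse $\mathrm{FOC}^2\not\leq\mathrm{U}_1$ I would exploit that $\mathrm{U}_1$ is one-dimensional and uniform, hence cannot count the out-degree of a single vertex in a binary relation, whereas $\mathrm{FOC}^2$ can.

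For $\mathrm{U}_1\not\leq\mathrm{FOC}^2$, fix a single ternary symbol $R$ and consider the legitimate $\mathrm{U}_1$-sentence $\varphi:=\exists x\exists y\exists z\,R(x,y,z)$, which asserts that $R$ is nonempty. Let $\mathfrak{A}$ have domain $\{1,2,3\}$ with $R^{\mathfrak{A}}=\{(1,2,3)\}$, and let $\mathfrak{B}$ have the same domain with $R^{\mathfrak{B}}=\emptyset$, so $\mathfrak{A}\models\varphi$ and $\mathfrak{B}\not\models\varphi$. The key observation is that every atom in the two variables $x,y$ built from $R$ --- the eight formulae $R(x,x,x),R(x,x,y),\dots,R(y,y,y)$ --- forces at least two of the three coordinates to coincide, whereas the only tuple in $R^{\mathfrak{A}}$ has three distinct coordinates. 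Hence no such atom is ever satisfied in $\mathfrak{A}$, and $\mathfrak{A},\mathfrak{B}$ realise exactly the same two-variable atomic types. Playing the bijective two-pebble game that characterises $\mathrm{FOC}^2$-equivalence, Duplicator maintains the identity bijection and wins, giving $\mathfrak{A}\equiv_{\mathrm{FOC}^2}\mathfrak{B}$. Thus $\varphi$ has no equivalent $\mathrm{FOC}^2$-sentence.

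For $\mathrm{FOC}^2\not\leq\mathrm{U}_1$, fix a binary symbol $R$ and consider the $\mathrm{FOC}^2$-sentence $\psi:=\exists x\,\exists^{\geq 2}y\,R(x,y)$, asserting that some element has at least two $R$-successors. I would separate it using directed cycles: let $\mathfrak{B}$ be the directed cycle $C_n$ on $\{1,\dots,n\}$ and let $\mathfrak{A}$ be $C_n$ together with one extra chord, say the edge $1\to 3$, for $n$ large relative to the quantifier rank under consideration. Then $\mathfrak{A}\models\psi$ (vertex $1$ has out-degree two) while $\mathfrak{B}\not\models\psi$. The point is that in both structures every element has an in-neighbour and an out-neighbour, and, since $\mathrm{U}_1$ may combine two-variable atoms only uniformly and leaves at most one variable free, it can test the \emph{existence} of a successor or predecessor of a given type but never the \emph{number} of successors of a fixed vertex. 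Consequently every vertex of both structures realises the same $\mathrm{U}_1$-type at every quantifier rank, and the chord is invisible. To make this precise I would use the Ehrenfeucht--Fra\"{i}ss\'{e} game for $\mathrm{U}_1$ from \cite{kuusistokieronski2015} and exhibit a winning strategy for Duplicator, yielding $\mathfrak{A}\equiv_{\mathrm{U}_1}\mathfrak{B}$ and hence that $\psi$ is not $\mathrm{U}_1$-expressible.

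The first direction is essentially immediate once the two-variable invisibility of an all-distinct triple is noticed. The main obstacle is the second direction: one must verify rigorously that Duplicator wins the $\mathrm{U}_1$-game on $\mathfrak{A}$ and $\mathfrak{B}$. The delicate points are that the block quantification of $\mathrm{U}_1$ can confront Duplicator with several witnesses simultaneously, together with equalities and unary constraints among them, so the strategy must answer a whole block while respecting the uniformity condition; and that $\mathrm{U}_1$ can still \emph{count} elements of a given one-dimensional type globally, so one must choose $n$ large enough and confirm that adding the chord alters neither the set of realised one-dimensional types nor their multiplicities.
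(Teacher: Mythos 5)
Your first direction coincides with the paper's: the same witness sentence $\exists x\exists y\exists z\, R(x,y,z)$ is used, and the two-pebble/bijective-game justification you supply is exactly the kind of routine argument the paper leaves implicit, so nothing to compare there. The second direction is where you genuinely diverge. The paper also targets degree counting, but it picks the \emph{universal} property ``the in-degree of $S$ at every node is at most one'' and then avoids any game whatsoever: if $\varphi(S)\in\UF{}$ defined this property, the $\UF{}$-formula $\varphi(S)\wedge\forall x\exists y\, S(x,y)\wedge\exists x\forall y\,\neg S(y,x)$ would be satisfiable yet admit only infinite models, contradicting the exponential (hence finite) model property of $\UF{}$ (Theorem \ref{expmodelproperty}). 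That is a two-line argument riding on a theorem already in hand. Your route --- the existential property $\exists x\exists^{\geq 2}y\, R(x,y)$ separated on $C_n$ versus $C_n$ plus a chord via the $\UF{}$ Ehrenfeucht--Fra\"{i}ss\'{e} game --- is plausible (the chord triangle $1\to 2\to 3$, $1\to 3$ is indeed invisible to $\UF{}$, since detecting it needs atoms over the three distinct pairs $\{x,y\},\{y,z\},\{x,z\}$ in one scope, which violates uniformity, and the two structures have equal cardinality as the bijection-based game requires), but the entire burden of the proof now sits in the Duplicator strategy that you explicitly defer as ``the main obstacle'': you must verify that all vertices of both structures realise the same $\UF{}$-type at every rank and that block moves confronting Duplicator with several witnesses, equalities and unary constraints can always be answered. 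So your approach buys a purely finitary, game-theoretic separation that does not invoke the finite model property, at the cost of the only genuinely hard step being left unverified; the paper's approach buys brevity and rigour at the cost of leaning on Theorem \ref{expmodelproperty}. Neither is wrong, but as submitted your second direction is a proof plan rather than a proof.
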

\begin{proof}
It is easy to show that 
the $\UF{}$-sentence
$\exists x \exists y \exists z R(x,y,z)$
cannot be expressed in $\mathrm{FOC}^2$,
and therefore $\UF{}\not\leq \mathrm{FOC}^2$.
To prove that $\mathrm{FOC}^2\not\leq\UF{}$,
let $S$ be a binary relation symbol.
%
%and consider models over the signature $\{R\}$.
%
We will show that $\UF{}$ cannot express the $\mathrm{FOC}^2$-definable
condition that the in-degree (with respect to the relation $S$) at every node is at most one.  
Assume $\varphi(S)$ is a $\UF{}$-formula that
defines the property. Consider the formula
\[\varphi(S) \wedge \forall x\exists y S(x,y)
\wedge  \exists x \forall y\neg S(y,x).\]
It is clear that this formula has only infinite models,
and thereby the assumption that $\UF{}$ can express $\varphi(S)$ is false
by the finite model property of \UF{} (Theorem \ref{expmodelproperty}).
%
%Thus $\mathrm{FOC}^2\not\leq\UF{}$.
%
\end{proof}
%
%See the appendix.\qed
%
%\end{proof}
%

%
We next consider $\UF{}$ over
vocabularies with at most binary relations.
\begin{theorem}[\cite{kuusistokieronski2014}]
Consider models over a relational
vocabulary $\tau$ with
the arity bound two. Suppose that $\tau$ indeed
contains at least one binary relation symbol.
Then $\FOt < \UF < \mathrm{FOC}^2$.
%
%The inclusion is strict if $\tau$ contains a binary symbol.
%
\end{theorem}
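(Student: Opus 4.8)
The plan is to establish the two inclusions $\FOt \le \UF$ and $\UF \le \mathrm{FOC}^2$ and then to show that both are strict; the hypothesis that $\tau$ contains a binary symbol will be used in the strictness arguments. The inclusion $\FOt \le \UF$ is immediate: by Proposition \ref{firstproof}, $\mathrm{FU}_1$ and $\FOt$ are equi-expressive over binary vocabularies, and $\mathrm{FU}_1$ is a syntactic fragment of $\UF$. For the strictness $\FOt < \UF$ I would reuse the example already discussed above: the $\UF$-sentence $\exists x\forall y\forall z(S(y,z)\rightarrow(x=y\vee x=z))$ asserts that some element lies on every $S$-edge, and a two-pebble-game argument on $K_2$ and $K_3$ shows it is not $\FOt$-definable; this is where a binary symbol $S$ is needed. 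Granting the inclusion $\UF\le\mathrm{FOC}^2$, the strictness $\UF<\mathrm{FOC}^2$ then follows verbatim from the proof of Theorem \ref{generalexpressivity}: the $\mathrm{FOC}^2$-property ``every node has $S$-in-degree at most one'' is not $\UF$-definable, since conjoining a putative defining formula with $\forall x\exists y\, S(x,y)$ and $\exists x\forall y\,\neg S(y,x)$ would force only infinite models, contradicting the finite model property (Theorem \ref{expmodelproperty}).

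The substantial part is the inclusion $\UF\le\mathrm{FOC}^2$ over binary vocabularies, which I would prove by induction on the construction of $\UF$-formulae, the key case being the quantifier clause. Fix a block $\exists x_1\cdots\exists x_k\,\varphi$ (or its sentential variant), where $\varphi$ is a Boolean combination of $X$-atoms and of subformulae $\psi\in\mathcal F$ with at most one free variable in $Y$. Over a binary vocabulary every $X$-atom has $|X|\le 2$, so all binary atoms occurring in $\varphi$ connect the \emph{same} fixed pair of variables, say $x_i,x_j$, while the remaining quantified variables occur only inside the one-variable subformulae $\psi$ and inside equality atoms. Writing $\varphi$ in disjunctive normal form and distributing the existential block over the disjuncts, each disjunct is a conjunction of: binary literals relating $x_i$ and $x_j$; literals $\psi(x_\ell)$ or $\neg\psi(x_\ell)$ carrying a one-variable ``type''; and (in)equality literals among the variables. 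The (in)equality literals determine a partition of the variables into classes that must be realised by pairwise distinct elements, and each class acquires a unary type, namely a conjunction of the relevant $\psi$-literals, which by the induction hypothesis translate into $\mathrm{FOC}^2$-formulae with a single free variable.

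Thus each disjunct of the existential block asserts the existence of pairwise distinct elements: two distinguished ones (the classes of $x_i,x_j$) standing in a prescribed binary relationship and realising prescribed unary types, together with the remaining classes realising their prescribed unary types. The relationship between the two distinguished elements is binary, hence expressible with two variables; the requirement that sufficiently many \emph{further} distinct elements of each unary type exist is exactly a counting statement, expressible by counting quantifiers $\exists^{\ge n}y\,\chi(y)$ applied to the inductively obtained $\mathrm{FOC}^2$-formulae, with thresholds adjusted to account for overlaps with the two distinguished elements. The free variable $x_0$ of the non-sentential block, if present, becomes the single free variable of the resulting $\mathrm{FOC}^2$-formula.

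I expect the main obstacle to be precisely this bookkeeping in the quantifier case: correctly combining the two-variable description of the distinguished pair with the counting of the remaining variables, while respecting all imposed equalities and inequalities and adjusting the counting thresholds for the elements that play a double role. The cases $|X|\le 1$, where there is no binary atom to track and the condition is pure counting, and the degenerate case in which the two distinguished variables are forced equal, reduce to unary types and are strictly simpler.
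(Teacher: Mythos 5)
Your two strictness arguments are exactly the paper's: the lower one reuses the preliminaries example $\exists x\forall y\forall z(R(y,z)\rightarrow(x=y\vee x=z))$ with the two-pebble game on $K_2$ and $K_3$ (the paper's proof just points back to that discussion), and the upper one is the in-degree argument lifted verbatim from the proof of Theorem \ref{generalexpressivity} via the finite model property (Theorem \ref{expmodelproperty}). Where you genuinely depart from the paper is the inclusion $\UF\leq\mathrm{FOC}^2$: the paper does not prove it but explicitly defers to a ``lengthy proof'' in \cite{kuusistokieronski2014}, whereas you sketch the inductive translation yourself. Your sketch rests on the correct structural observation --- over an arity-two vocabulary every quantifier block has $|X|\leq 2$, so a DNF disjunct decomposes into a two-variable core on one distinguished pair, unary types obtained from the induction hypothesis, and (in)equality literals --- and this is indeed the standard route. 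Two points, however, are underestimated rather than merely ``bookkeeping.'' First, the (in)equality literals of a disjunct are in general only a partial constraint, so they do not ``determine a partition''; you must further expand over all complete equality types consistent with them. Second, and more importantly, replacing ``there exist pairwise distinct witnesses of unary types $\chi_1,\dots,\chi_m$, all distinct from the distinguished pair and from $x_0$'' by a conjunction of counting statements $\exists^{\geq n}y\,\chi(y)$ is unsound for overlapping types (two types sharing a single realizing element satisfy both counts but not joint distinct realizability); you must first refine to mutually exclusive complete types over the relevant $\psi$-literals, or impose a Hall-type family of counting conditions on unions of types, and only then adjust thresholds for coincidences with the distinguished pair and with the free variable $x_0$ (you handle the pair but not $x_0$). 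None of this derails the argument --- it does go through --- but it is precisely the content of the proof the paper outsources, so your inclusion should be regarded as a correct plan rather than a completed proof.
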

\begin{proof}
We already discussed the strict inclusion $\FOt < \UF{}$ above in
the preliminaries section.
A lengthy proof of the inclusion $\UF \leq \mathrm{FOC}^2$ is
given in \cite{kuusistokieronski2014}. The strictness of this
inclusion follows from
%
%the fact established in
%
the proof of Theorem \ref{generalexpressivity} 
where we showed that $\mathrm{U}_1$ cannot
express that the in-degree of a binary relation is at most one.
\end{proof}
We then compare the expressivities of $\UF{}$ and
the guarded negation fragment $\mathrm{GNFO}$
\cite{IEEEonedimensional:barany}.
The first non-inclusion ($\UF{}\not\leq\mathrm{GNFO}$) of
the following theorem has been proved in \cite{kuusistohella2014},
where only the equality-free fragment of $\UF{}$ was investigated.
The second non-inclusion ($\mathrm{GNFO}\not\leq\UF{}$) is new.
\begin{theorem}
$\UF{}$ and $\mathrm{GNFO}$ are incomparable in expressivity.
\end{theorem}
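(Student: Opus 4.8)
The statement requires two non-inclusions. For the direction $\UF{}\not\leq\mathrm{GNFO}$, I would reuse the witnessing sentence that already worked against $\mathrm{FOC}^2$, namely $\exists x\exists y\exists z\,R(x,y,z)$ with $R$ ternary—or more robustly, a $\UF{}$-sentence expressing a counting property on a ternary relation. The plan is to exploit a known semantic invariant of $\mathrm{GNFO}$: formulae of $\mathrm{GNFO}$ are preserved under a suitable notion of guarded bisimulation (or, equivalently, cannot distinguish certain guarded-bisimilar structures). I would exhibit two structures that are guarded-bisimilar (hence agree on all $\mathrm{GNFO}$-sentences) but are separated by the chosen $\UF{}$-sentence. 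Since the excerpt notes that the analogous non-inclusion for the equality-free fragment was already established in \cite{kuusistohella2014}, I would lift that argument: any structures separating the equality-free $\UF{}$ sentence from $\mathrm{GNFO}$ serve here as well, because the equality-free fragment is contained in $\UF{}$.

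For the new direction $\mathrm{GNFO}\not\leq\UF{}$, I would isolate a $\mathrm{GNFO}$-definable property that $\UF{}$ provably cannot express, and the natural candidate is again a degree/counting statement on a binary relation, since the excerpt repeatedly emphasizes that $\UF{}$ ``cannot make counting statements about the in-degrees and out-degrees of binary relations.'' Concretely, the property ``the in-degree of $S$ is at most one'' is guarded (each atom $S(y,x)$ guards the relevant tuple), so it lies in $\mathrm{GNFO}$; yet the proof of Theorem \ref{generalexpressivity} already shows $\UF{}$ cannot express it. So the plan is: first verify that this in-degree condition is expressible in $\mathrm{GNFO}$ by writing it with guarded negations only, then invoke verbatim the finite-model-property argument of Theorem \ref{generalexpressivity}—the conjunction with $\forall x\exists y\,S(x,y)$ and $\exists x\forall y\neg S(y,x)$ forces infinite models, contradicting Theorem \ref{expmodelproperty}.

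The main obstacle is the first direction: I must produce structures that are genuinely guarded-bisimilar. Binary witnesses are delicate because over binary vocabularies $\UF{}$ is tame, so the separating example must genuinely use a relation of arity at least three, and I must check that the higher-arity atoms do not inadvertently break guarded bisimilarity. The cleanest route is to take a disjoint-union or blow-up construction where the ternary relation $R$ is present in one structure and absent (or differently tupled) in the other, arranged so that every guarded tuple has a matching guarded tuple on the other side. I expect verifying this bisimilarity to require care about how $\mathrm{GNFO}$ handles projections of higher-arity atoms, whereas the second direction is essentially immediate from the already-proved Theorem \ref{generalexpressivity} once guardedness of the in-degree formula is noted.
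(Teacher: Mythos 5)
There is a genuine gap, and it affects both directions. For the direction $\mathrm{GNFO}\not\leq\UF{}$, your chosen witness ``the in-degree of $S$ is at most one'' is not a $\mathrm{GNFO}$-sentence, and your justification (``each atom $S(y,x)$ guards the relevant tuple'') does not hold up. Writing the property out gives $\neg\exists x\exists y\exists z\bigl(S(y,x)\wedge S(z,x)\wedge \neg(y=z)\bigr)$, and the inner negation $\neg(y=z)$ has free variables $y,z$ that do not co-occur in any atom; there is no guard for it. Functionality/in-degree constraints are in fact a standard example of properties \emph{not} expressible in $\mathrm{GNFO}$ (this is essentially why $\UF{}$'s inability to count in-degrees cannot be leveraged against $\mathrm{GNFO}$). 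The paper instead takes the positive-existential sentence $\exists x\exists y\exists z(Rxy\wedge Ryz\wedge Rzx)$, which is trivially in $\mathrm{GNFO}$, and shows via the Ehrenfeucht--Fra\"{i}ss\'{e} game for $\UF{}$ that four disjoint directed $3$-cycles and three disjoint directed $4$-cycles (chosen to have equal cardinality, which the game requires) satisfy the same $\UF{}$-sentences while being separated by that sentence.

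For the direction $\UF{}\not\leq\mathrm{GNFO}$, your concrete witness $\exists x\exists y\exists z\,R(x,y,z)$ also cannot work: it is positive existential, hence itself a $\mathrm{GNFO}$-sentence. What is needed is a sentence whose negation is essentially unguarded; the paper uses $\exists x\exists y\,\neg R(x,y)$ and the two $\mathrm{GNFO}$-bisimilar structures $(\{a\},\{(a,a)\})$ and $(\{a,b\},\{(a,a),(b,b)\})$. Your fallback of citing the already-established non-inclusion for the equality-free fragment of $\UF{}$ from the earlier paper is legitimate and would rescue this half, but as written the direction $\mathrm{GNFO}\not\leq\UF{}$ --- the part that is actually new in this theorem --- does not go through.
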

\begin{proof}
Define the two structures
$\bigl(\{a\},\{(a,a)\}\bigr)$ and $\bigl(\{a,b\},\{(a,a),(b,b)\}\bigr)$.
It is straightforward to establish by using the bisimulation for 
$\mathrm{GNFO}$, provided in \cite{IEEEonedimensional:barany}, that
these two structures are bisimilar in the sense of $\mathrm{GNFO}$.
Thus the $\UF{}$-sentence $\exists x\exists y\,\lnot R(x,y)$
is not expressible in $\mathrm{GNFO}$.
Hence $\UF{}\not\leq\mathrm{GNFO}$.
Consider then the $\mathrm{GNFO}$-sentence
\[\varphi := \exists x\exists y \exists z (Rxy\wedge Ryz \wedge Rzx).\]
%
%Let $C_3^d$ and $C_4^d$ denote the directed cycles with three
%
%and four elements, respectively.
%
Let $\mathfrak{A}$ denote the model consisting of
four disjoint copies of the directed cycle with three elements. Let $\mathfrak{B}$
be the model with three disjoint 
copies of the directed cycle with four elements.
It follows rather directly from the Ehrenfeucht-Fra\"{i}ss\'{e}
game for $\UF{}$ (which is defined in \cite{kuusistokieronski2015})
that $\mathfrak{A}$ and $\mathfrak{B}$ satisfy the same $\UF{}$-sentences.
For the game-based argument to work, it is essential that the two models $\mathfrak{A}$
and $\mathfrak{B}$ have the same cardinality, because bijections between subsets
of the domains of $\mathfrak{A}$ and $\mathfrak{B}$ are used in the game.
(See \cite{kuusistokieronski2015} for a detailed 
discussion of the game.)
With $\mathfrak{A}$ and $\mathfrak{B}$ defined in
this way, the rest of the 
game-based argument is straightforward.
We can therefore now conclude that $\UF{}$ cannot
express the $\mathrm{GNFO}$-sentence $\varphi$ we
fixed above, and hence $\mathrm{GNFO}\not\leq\UF{}$.
%
%When attention is restricted to models with only binary relations, 
%
%$\UF{}$ does not contain $\mathrm{GNFO}$.
%
\end{proof}
Before we close the current section,
we observe that
all the above results concerning expressivity
hold even if attention is limited to finite models only.
The same proofs apply without modification, as the reader can check.
This is especially interesting in
the case of Theorem \ref{generalexpressivity},
whose proof makes use of the finite model property of $\UF{}$.
\section{Undecidability of $\mathrm{U}_1$
with counting quantifiers}\label{undecidabilitysection}

%\vspace{-3pt}

%
\begin{comment}
%
We have demonstrated that \UF{} not only contains \FOt{}, but also that its decidability
can be established by an adaptation of the approach used in \cite{GKV97}.
%for showing decidability \FOt{}.
In recent years, decidability of various extensions of \FOt{} has been examined.
Decidability properties of two-variable logics are delicate. For 
example, extending \FOt{} by natural constructors
such as the transitive closure operator, fixed point operator, or even the capacitiy to declare
atomic relations to be transitive (or equivalence relations), leads to undecidability.
Perhaps the most prominent extension that remains decidable is the 
extension $\mathrm{FOC}^2$ of $\mathrm{FO}^2$ by counting quantifiers.
The logic $\mathrm{FOC}^2$ was shown decidable in \cite{GOR97,PST97},
and $\mathrm{NEXPTIME}$-complete in \cite{PH05}.
% (see also \cite{PH10} for a simpler proof).
It is natural to ask is whether the extension of \UF{} by counting quantifiers, \UFC{}, remains decidable.
%
We show that both the general and the finite satisfiability problems for $\UFC{}$ are undecidable.
%We establish the stronger result stated below.
%\footnote{E:The paragraph above probably belongs to Introduction.}
%(For background information about conservative reductions,
%and a more gentle introduction to the method we
%see for example \cite{BGG}.)
%Here we sketch the main ideas of the proof.)
%
\end{comment}
%

%%%%%%%%%%%%%%%%%%%

\newcommand{\LL}{\mathcal{L}}
\newcommand{\G}{\mathfrak{G}}
\newcommand{\NN}{\mathbb{N}}
\newcommand{\cT}{\mathbb{T}}
\newcommand{\TDUF}{\mathcal{UF}_3}
\newcommand{\ODNF}{\mathcal{NF}_1}

Since $\mathrm{FOC}^2$ and $\UF{}$ are both $\mathrm{NEXPTIME}$-complete,
it is natural to ask whether the
extension of \UF{} by counting quantifiers ($\mathrm{UC_1}$) remains decidable.
Formally, $\mathrm{UC_1}$ is obtained from \UF{} by allowing the free substitution of 
quantifiers $\exists$ by quantifiers $\exists^{\geq k},\exists^{\leq k},\exists^{=k}$.
While the transition from $\mathrm{FO}^2$ to $\mathrm{FOC}^2$
preserves $\mathrm{NEXPTIME}$-complete- ness,
the analogous step from $\mathrm{U}_1$ to $\mathrm{UC}_1$
crosses the undecidability barrier.
\begin{theorem}[\cite{kuusistokieronski2014}]\label{theoremundec}
The satisfiability and finite
satisfiability problems of\, $\mathrm{UC}_1$ are $\Pi_1^0$-complete
and $\Sigma_1^0$-complete, respectively.
\end{theorem}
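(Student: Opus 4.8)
The plan is to establish both completeness results by reductions from two variants of the undecidable tiling problem, together with the trivial observation that $\mathrm{UC}_1$ is a fragment of first-order logic. For the upper bounds, note that satisfiability for $\mathrm{FO}$ lies in $\Pi_1^0$ (a sentence is satisfiable iff its negation is not valid, and validity is recursively enumerable by completeness), while finite satisfiability for $\mathrm{FO}$ lies in $\Sigma_1^0$ (one enumerates finite structures and model-checks). Since every $\mathrm{UC}_1$-sentence is an $\mathrm{FO}$-sentence, the same bounds apply to $\mathrm{UC}_1$. Matching hardness will come from the grid tiling problem, which is $\Pi_1^0$-complete, for satisfiability, and from the periodic tiling problem, which is $\Sigma_1^0$-complete, for finite satisfiability.

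The central difficulty is that the natural encoding of a grid uses two distinct binary relations $H$ and $V$, and a formula such as $\exists y\exists z(H(x,y)\wedge V(y,z)\wedge\dots)$ is forbidden both by one-dimensionality and by the uniformity condition, since the atoms $H(x,y)$ and $V(y,z)$ range over different variable sets. I would therefore encode the grid in a single ternary relation $R$ together with a unary predicate $E$ marking the nodes of even rows, reading $R(a,b,c)$ as ``$b$ is a successor of $a$ and $c$ is a successor of $b$'', and recovering horizontal and vertical adjacency through the parity abbreviations $\varphi_H(x,y):=\exists z((R(x,y,z)\vee R(z,x,y))\wedge(E(x)\leftrightarrow E(y)))$ and $\varphi_V(x,y):=\exists z((R(x,y,z)\vee R(z,x,y))\wedge(E(x)\leftrightarrow\neg E(y)))$. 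These two-free-variable abbreviations are not themselves $\mathrm{UC}_1$-formulae, but every sentence built from them can be brought into legitimate form: pulling the inner $\exists z$ out past the surrounding conjunctions and disjunctions merges it with the outer quantifier block, yielding a single block of (possibly counting) quantifiers applied to a Boolean combination of $\{x,y,z\}$-atoms and unary formulae, which respects both uniformity and one-dimensionality. The counting quantifiers are indispensable here: I use $\exists^{=1}$ in a sentence $\eta$ to force each node to have a \emph{unique} horizontal successor, a unique vertical successor and a unique diagonal neighbour reachable by $R$, while a universal clause enforces the parity flip along the second coordinate so that horizontal-then-vertical and vertical-then-horizontal paths confluently meet.

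With $\eta$ in hand I would prove the two properties that make the reduction work: (i) there is a model $\mathfrak{A}\models\eta$ whose induced $\{H,V\}$-structure $\mathfrak{A}^*$ is exactly the standard grid, and (ii) for every model $\mathfrak{A}\models\eta$ there is a homomorphism from the grid into $\mathfrak{A}^*$. Adding a conjunct $\varphi_{\mathbb{T}}$ that assigns to each node exactly one tile and forbids colour clashes across $\varphi_H$- and $\varphi_V$-edges, I obtain that $\eta\wedge\varphi_{\mathbb{T}}$ is satisfiable iff $\mathbb{T}$ tiles the grid, giving the $\Pi_1^0$-hardness of satisfiability. For finite satisfiability I would replace $\eta$ by a variant $\eta'$ whose models wrap around into a torus, so that $\eta'\wedge\varphi_{\mathbb{T}}$ has a finite model iff some $(n\times n)$-torus is $\mathbb{T}$-tilable, yielding $\Sigma_1^0$-hardness.

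The main obstacle is property (ii). Establishing it requires unwinding the functional successor constraints imposed by the $\exists^{=1}$ quantifiers and inductively defining the homomorphism coordinate by coordinate, using the confluence clause to guarantee that the diagonal neighbour reached by stepping first horizontally and then vertically coincides with the one reached in the opposite order; without the counting quantifiers the successors need not be functional and this commuting-square argument collapses. A secondary but necessary bookkeeping task is to verify, for each conjunct of $\eta$, $\eta'$ and $\varphi_{\mathbb{T}}$, that after the quantifier-merging described above it is genuinely a $\mathrm{UC}_1$-formula, i.e. that the uniformity and one-dimensionality conditions are indeed met.
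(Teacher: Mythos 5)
Your proposal is correct and follows essentially the same route as the paper's own argument: the identical encoding of the grid by a ternary relation $R$ together with a parity predicate $E$, the same abbreviations $\varphi_H$ and $\varphi_V$, the use of $\exists^{=1}$ to force functional successors, the two key properties (a model realizing the standard grid, and a homomorphism from the grid into $\mathfrak{A}^*$ for every model of $\eta$), and the switch to the periodic tiling problem for the $\Sigma_1^0$-hardness of finite satisfiability. No substantive differences to report.
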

%

%
\begin{comment}
%
The above argument leading to Theorem \ref{theoremundec} can be used
%
with minor modifications in order to show $\sigma_1^0$-completeness of the
%
finite satisfiability problem of $\UFC{}$. The argument uses the periodic tiling problem.
%
Thus the following theorem holds.
%
%(see the appendix)
%
\end{comment}
%
%Also the finite satisfiability problem becomes undecidable.
%
%\begin{theorem}[\cite{kuusistokieronski2014}]\label{undecfinite}
%
%The finite satisfiability problem of\, $\mathrm{UC}_1$ is $\sigma_1^0$-complete.
%
%\end{theorem}
%

%
Thereby $\mathrm{UC}_1$ has the same complexity as
full first-order logic. It is an interesting open problem to
identify natural logics that extend $\mathrm{FOC}^2$
into higher arity contexts in a 
way that preserves decidability. Possible research directions here could
involve for example investigating restrictions of $\mathrm{UC}_1$
based on somewhat more limited ways of using 
the quantifiers $\exists^{\geq k},\exists^{\leq k},\exists^{= k}$.
\section{$\mathrm{U}_1$ and description logics}
In this section we define a
novel logic $\mathcal{DL}_{\mathrm{FU}_1}$ which is a
description logic version of $\mathrm{FU}_1$
and compare it to $\mathcal{DLR}_{\mathit{reg}}$ \cite{calvanese},
which is a well-known description logic that
accommodates higher arity relations.
We first generalize the relational inverse operation to
contexts with higher arity relations.
When $n$ is a
positive integer, we let $[n]$ denote the set $\{1,...,n\}$.
%
%We call $\sigma$ a \emph{permutation}.
%
%Let $\mathrm{PRM}$ be a the infinite set of all such
%
%finite permutations.
%
We let $\mathrm{SRJ}$
denote the set of all surjections $\sigma:[k]\rightarrow[m]$,
such that $2\leq m\leq k$.
When $m = k$, then $\sigma$ is a permutation;
permutations are natural generalizations of the 
relational inverse operator into higher arity contexts,
and surjections generalize permutations an inch further.
%
%The surjections $\sigma :[m]\rightarrow[k]$ for $m < k$
%
When we use $\mathrm{SRJ}$ in
constructing the syntax of $\mathcal{DL}_{\mathrm{FU}_1}$ below,
we assume each function $\sigma\in\mathrm{SRJ}$ to be a
suitable string listing the ordered pairs $(n,k)$
such that $\sigma(n) = k$ in binary.
%
%Thus $\mathrm{PRM}$ is simply the
%
%set of all permutations, with each permutation ecoded in a
%
%suitable way using binary representation. 
%

%
%We denote the logic by $\mathrm{DLFU_1}$.
%
The set $\mathcal{R}$ of roles
of $\mathcal{DL}_{\mathrm{FU}_1}$ is defined by the grammar
$$\mathcal{R}\, ::=\, R\, |\, \varepsilon\, |\, \neg\mathcal{R}\, |\, 
(\mathcal{R}_1\cap\mathcal{R}_2)\, |\, \sigma \mathcal{R}$$
where $R$ denotes an atomic role, $\varepsilon$ the
binary identity role and $\sigma\in\mathrm{SRJ}$.
Here $R$ can have any arity greater or equal to two,
and the arity of $\varepsilon$ is two. The intersection of
relations of different arity will produce the empty relation, so
we may as well allow such terms. (We fix the 
arity of the empty relation in such cases to be two.)
The set of concepts of $\mathcal{DL}_{\mathrm{FU}_1}$ is given by the grammar
$$C\, ::=\, A\, |\, \neg C\, |\, (C_1\sqcap C_2)\, |\, 
\exists \mathcal{R}. (C_1,...,C_n)\, $$
where $A$ is an atomic concept and
the arity of the relation term $\mathcal{R}$ is $n+1$.
An \emph{interpretation} $\mathcal{I}$ is a 
pair $(\Delta,\cdot^{\mathcal{I}})$, where 
$\Delta$ is a nonempty set and $\cdot^{\mathcal{I}}$  a
function such that $R^{\mathcal{I}}\subseteq\Delta^k$
and $A^{\mathcal{I}}\subseteq \Delta$ for atomic
roles $R$ and atomic concepts $A$; here $k$ is the arity of $R$.
The operators of $\mathcal{DL}_{\mathrm{FU}_1}$
are defined as follows.
\begin{enumerate}
\item
$\varepsilon^{\mathcal{I}} := \{\, (u,u)\ |\ u\in\Delta\, \}$,
$(\neg \mathcal{R})^{\mathcal{I}} :=
\Delta^{n+1}\setminus \mathcal{R}^{\mathcal{I}}$
and $(\mathcal{R}_1\cap\mathcal{R}_2)^{\mathcal{I}}
:= \mathcal{R}_1^{\mathcal{I}}\cap \mathcal{R}_2^{\mathcal{I}}$.
%
\begin{comment}
%
\item
$(\sigma\mathcal{R})^{\mathcal{I}} := \{\, (u_{\sigma(1)},...,u_{\sigma(n)})
%
\ |\ (u_1,...,u_n)\in\mathcal{R}^{\mathcal{I}}\, \}$
%
\end{comment}
%
\item
$(\sigma\mathcal{R})^{\mathcal{I}} :=
\{(u_1,...,u_m)\, |\,
(u_{\sigma(1)},...,u_{\sigma(n+1)})\in\mathcal{R}^{\mathcal{I}} \}$.
Here $\sigma$ maps $[n+1]$ onto $[m]$.
The arity of $(\sigma\mathcal{R})^{\mathcal{I}}$ is of course $m$.
%
\begin{comment}
%
\item
$\bigl(\, (i,j)\mathcal{R}\, \bigr)^{\mathcal{I}}
%
:= \{\, (u_1,...,u_{i-1},u_{i+1},...,u_n)\ |\
%
u_i = u_j\text{ and }(u_1,...,u_n)\in\mathcal{R}\, \}$. This 
%
\emph{coordinate identification operator} deletes the $i$th
%
coordinate of each tuple in $\mathcal{R}$
%
where this coordinate is equal to the $j$th coordinate and
%
returns a relation with reduced arity.
%
\end{comment}
%
\item
$(\neg C)^{\mathcal{I}} := \Delta\setminus C^{\mathcal{I}}$
and $(C\sqcap D)^{\mathcal{I}} := C^{\mathcal{I}}\cap D^{\mathcal{I}}$.
\item
$(\, \exists \mathcal{R}.(C_1,...,C_{n})\, )^{\mathcal{I}} :=$\\
$\{\, u\in \Delta\, |\, \text{there is a tuple }(u,v_1,...,v_{n})\in \mathcal{R}^{\mathcal{I}}
\text{ s.t. }v_i\in C_i^{\mathcal{I}}\text{ for each }i\, \}$
\end{enumerate}
In the pathological 
case where $\sigma:[n]\rightarrow[m]$ acts on a relation $\mathcal{R}$
whose arity is not equal to $n$, the empty binary relation is produced. 
We need the surjection operators (rather than simply permutations) in
order to express in $\mathcal{DL}_{\mathrm{FU}_1}$
conditions such as the one
given by the $\mathrm{FU}_1$-formula $\exists y
(R(x,y)\wedge S(x,y,x) \wedge P(y))$.
In the following theorem, equivalence means
equivalence in the standard sense in which formulae of modal and
predicate logic are compared.
\begin{theorem}\label{timegoesbytheorem}
$\mathcal{DL}_{\mathrm{FU}_1}$ and $\mathrm{FU}_1$
are equi-expressive: each $\mathrm{FU}_1$-formula $\varphi(x)$
has an equivalent $\mathcal{DL}_{\mathrm{FU}_1}$-concept,
and vice versa.
\end{theorem}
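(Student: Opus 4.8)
The plan is to establish the equi-expressivity by providing two translations and proving each preserves meaning by induction on formula/concept structure.

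First I would set up the correspondence between the one free variable of an $\mathrm{FU}_1$-formula $\varphi(x)$ and the single "anchor" element of a $\mathcal{DL}_{\mathrm{FU}_1}$-concept. The key semantic fact to exploit is one-dimensionality: every $\mathrm{FU}_1$-formula has at most one free variable, and a concept denotes a subset of $\Delta$, so both define unary predicates. The translations will map $\varphi(x)$ to a concept $C_\varphi$ with $\llbracket C_\varphi\rrbracket^{\mathcal{I}} = \{a : \mathcal{I}\models\varphi[a]\}$, and conversely map each concept $C$ to a formula $\varphi_C(x)$ with the dual property. Atomic cases and the Boolean cases ($\neg$, $\wedge/\sqcap$) are immediate and handled by matching clauses in both grammars.

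The heart of the argument is the quantifier/existential-role clause. In one direction, I would take an $\mathrm{FU}_1$-formula $\exists x_1\dots\exists x_k\,\psi$ where $\psi$ is a Boolean combination of $X$-atoms and subformulae with free variables in $\{x_0,\dots,x_k\}$. Because of the uniformity condition, all the relation atoms appearing involve exactly the same variable set $X$; this is precisely what lets me gather them into a single composite role $\mathcal{R}$ built from atomic roles via $\cap$, $\neg$, and the surjection operators $\sigma\in\mathrm{SRJ}$. The surjections are essential: when an $X$-atom repeats a variable (as in $S(x,y,x)$) or uses a permuted order, the operator $\sigma$ reshapes the stored tuple so that the intended $(n+1)$-ary role over the distinct variables is recovered — this is exactly the motivating example $\exists y(R(x,y)\wedge S(x,y,x)\wedge P(y))$ cited before the theorem. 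The unary subformulae attached to each quantified variable become the concept arguments $C_1,\dots,C_n$ of $\exists\mathcal{R}.(C_1,\dots,C_n)$, obtained by the induction hypothesis. I would first normalize $\psi$ into a disjunction of conjunctions (as Proposition~\ref{firstproof} already suggests is the natural route), translate each disjunct to a concept of the form $\exists\mathcal{R}.(\dots)$, and take $\sqcup$ (expressible via $\neg,\sqcap$). The converse direction reverses this: given $\exists\mathcal{R}.(C_1,\dots,C_n)$, unwind the role term $\mathcal{R}$ into an explicit Boolean combination of atoms over fresh distinct variables $x,y_1,\dots,y_n$, reinterpreting each $\sigma$ as a pattern of variable identification, and attach $\varphi_{C_i}(y_i)$ under the block of existential quantifiers $\exists y_1\dots\exists y_n$, checking the uniformity condition is met.

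The main obstacle I expect is the bookkeeping in the quantifier clause: matching the surjection operators against repeated-variable atoms so that the composite role $\mathcal{R}$ has exactly the right arity $n+1$ and the right tuples, while respecting that $\cap$ of differing arities collapses to the empty relation. I would need to argue carefully that the uniformity condition guarantees a well-defined common variable set $X$ so that all atoms can be folded into one role of a single fixed arity, and that degenerate cases (the full block $\exists x_0\dots\exists x_k$ producing a sentence, or atoms collapsing via identified coordinates) are handled by the $\varepsilon$ and singleton-concept machinery already present in $\mathcal{DL}_{\mathrm{FU}_1}$. Once the quantifier clause is verified in both directions, the inductive preservation of meaning is routine, and equi-expressivity follows.
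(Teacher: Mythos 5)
Your proposal is correct and follows essentially the same route as the paper's (sketched) proof: a structural induction whose crux is the quantifier clause, handled by putting the matrix into disjunctive normal form, distributing the existential block over the disjuncts, folding the uniform higher-arity literals into a single composite role via $\cap$, $\neg$ and the surjection operators, and attaching the inductively translated unary subformulae as the concept arguments of $\exists\mathcal{R}.(C_1,\dots,C_n)$. The bookkeeping concerns you raise (surjections for repeated/permuted variables, the sentence case with the full block $\exists x_0\dots\exists x_k$) are exactly the points the paper's fuller version of the argument addresses.
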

\begin{proof}
We only provide a rough sketch the proof.
The most involved 
issue here is the translation of
$\mathrm{FU}_1$-formulae of
the type $\exists x_1...\exists x_k \varphi$
into $\mathcal{DL}_{\mathrm{FU}_1}$,
where $\varphi$ is a Boolean combination
of higher arity atoms and 
at most unary $\mathrm{FU}_1$-formulae. 
Here we put $\varphi$ into disjunctive normal form
and distribute the quantifier prefix over the
disjunctions in order to obtain a disjunction of
formulae of the type
$$\exists x_1\dots \exists x_k
(\mathcal{T}(y_1,\dots ,y_n)\wedge \chi_1(u_1)\wedge
\dots \wedge\chi_m(u_m)\bigr)$$
such that the following three conditions hold.
\begin{itemize}
\item
$\{y_1,\dots ,y_n\}\subseteq\{x_0,x_1,\dots ,x_k\}$
\item
$\{u_1,\dots ,u_m\}\subseteq\{x_0,x_1,\dots ,x_k\}$ 
\item 
The term $\mathcal{T}(y_1,\dots ,y_n)$ is a
\emph{conjunction} of higher arity literals
(atoms and negated atoms) such that each literal
has exactly the same set $\{y_1,...,y_n\}$ of variables.
\end{itemize}
Such formulae 
can easily be translated into $\mathcal{DL}_{\mathrm{FU}_1}$,
assuming inductively that we already know how to translate the 
unary $\mathrm{FU}_1$-formulae $\chi_i(u_i)$.
\end{proof}
We then define the description logic $\mathcal{DLR}_{\mathit{reg}}$
from \cite{calvanese}
and compare it to $\mathcal{DL}_{\mathrm{FU}_1}$.
$\mathcal{DLR}_{\mathit{reg}}$ is defined  by the grammar
\begin{align*}
\mathcal{R}\ &::=\ \top_n\ |\ R\ |\ (\$i/n:C)\ |\ \neg\mathcal{R}\ |\
(\mathcal{R}_1\cap\mathcal{R}_2)\\
\mathcal{E}\ &::=\ \varepsilon\ |\
\mathcal{R}_{|\$i,\$j}\ |\ (\mathcal{E}_1\circ\mathcal{E}_2)\ |
\ (\mathcal{E}_1\cup\mathcal{E}_2)\ |\ \mathcal{E}^*\\
C\ &::=\ \top_1\ |\ A\ |\ \neg C\ |\ (C_1\sqcap C_2)
\ |\ \exists\mathcal{E}.C\ |\  \exists[\$i]\mathcal{R}\ |
\ (\leq k\, [\$i]\mathcal{R})
\end{align*}
where $R$ is an atomic role and $A$
an atomic concept from a finite set $\mathcal{V}$ of
atomic role and concept symbols.
The indices $i$ and $j$ denote 
integers between $1$ and $n_{\mathit{max}}$ 
(where $n_{\mathit{max}}$ is the maximum  arity
of  the symbols in $\mathcal{V}$), $n$
denotes an integer between $2$ and $n_{\mathit{max}}$
and $k$ denotes a non-negative integer.
All these numbers are encoded in binary.
An interpretation $\mathcal{I} = (\Delta,\cdot^{\mathcal{I}})$
for $\mathcal{DLR}_{\mathit{reg}}$ over $\mathcal{V}$ is 
any structure such that the following conditions are met (cf. \cite{calvanese}).
\begin{enumerate}
\item
For each atomic concept $A\in\mathcal{V}$ and atomic role $R\in\mathcal{V}$,
we have $A\subseteq\Delta$ and $R\subseteq\Delta^n$, where $n$ is
the arity of $R$.
\item
For each $n>1$, $(\top_n)^{\mathcal{I}}$
is  a subset of $\Delta^n$
that covers the relations of arity $n$.
%
%(This relation can be different from the )
%
%
%
\item
$(\$i/n:C)^{\mathcal{I}}$ is the set of tuples $(u_1,...,u_n)\in(\top_n)^{\mathcal{I}}$
such that $u_i\in C^{\mathcal{I}}$.
\item
$(\neg\mathcal{R})^{\mathcal{I}} =
(\top_n)^{\mathcal{I}}\setminus\mathcal{R}^{\mathcal{I}}$
when $\mathcal{R}$ is an $n$-ary term
and $(\mathcal{R}_1\cap\mathcal{R}_2)^{\mathcal{I}}
= \mathcal{R}_1^{\mathcal{I}}\cap\mathcal{R}_2^{\mathcal{I}}$.
\item
$\varepsilon^{\mathcal{I}} = \{\, (u,u)\, |\, u\in\Delta\, \}$ and
$(\mathcal{R}_{|\$i,\$j})^{\mathcal{I}}$ is the relation
%
%\medskip

%
\begin{center}
$\{\, (u,v)\, |\, 
u =w_i\text{ and }v=w_j\text{ for some tuple }(w_1,...,w_n)
\in\mathcal{R}^{\mathcal{I}}\, \}.$
\end{center}
%

%\medskip

%
\item
The operators $\circ$, $\cup$ and $\cdot^*$ in the
terms $(\mathcal{E}_1\circ\mathcal{E}_2)$, $(\mathcal{E}_1\cup\mathcal{E}_2)$
and $\mathcal{E}^{*}$
are interpreted in the usual way, i.e., $\circ$ is the 
relational composition operator, $\cup$ 
the union and $\cdot^*$ the
transivitive reflexive closure operator.
\item
$(\top_1)^{\mathcal{I}} =  \Delta$,
$(\neg C)^{\mathcal{I}} = (\top_1)^{\mathcal{I}}\setminus C^{\mathcal{I}}$
and $(C\sqcap D)^{\mathcal{I}} = C^{\mathcal{I}}\cap D^{\mathcal{I}}$.
\item
$(\exists\mathcal{E}.C)^{\mathcal{I}}
= \{\, u\, |\, \text{ exists }(u,v)\in\mathcal{E}^{\mathcal{I}}
\text{ such that }v\in C^{\mathcal{I}}\, \}$
\item
$(\exists[\$i]\mathcal{R})^{\mathcal{I}} = \{\, u\, |\, \text{ exists }
(v_1,...,v_n)\in\mathcal{R}^{\mathcal{I}}
\text{ such that }u = v_i\}$
\item
$(\leq k\, [\$i]\mathcal{R})^{\mathcal{I}} = \{\, u\, |\, 
\,  \, \, |\{\, u\, |\, \text{exists }(v_1,...,v_n)\in\mathcal{R}^{\mathcal{I}}
\text{ s.t. }u = v_i\}|\, \leq k\, \}$.
\end{enumerate}
%

%
%We then compare the expressivity of $\mathcal{DLR}_{\mathit{reg}}$
%
%and $\mathcal{DL}_{\mathrm{FU}}$.
%
$\mathcal{DLR}_{\mathit{reg}}$ interpretations
are associated with the atomic built-in relations $\top_n$.
%
%in addition to the atomic concepts and roles.
%
When comparing the expressivity of $\mathcal{DLR}_{\mathit{reg}}$
with $\mathcal{DL}_{\mathrm{FU}_1}$ below, we
consider interpretations $\mathcal{I}$ where
the relations $\top_n$ are appropriate atomic built-in roles 
and thus directly available
also in $\mathcal{DL}_{\mathrm{FU}_1}$.
%

%
\begin{comment}
%
We then compare the expressivity of 
%
$\mathcal{DLR}_{\mathit{reg}}$
%
and $\mathcal{DL}_{\mathrm{FU}}$
%
with respect to $\mathcal{DLR}_{\mathit{reg}}$
%
\end{comment}
%

%
\begin{proposition}\label{propositionsomething}
$\mathcal{DLR}_{\mathit{reg}}$ and $\mathcal{DL}_{\mathrm{FU_1}}$ are
incomparable in expressvity.
\end{proposition}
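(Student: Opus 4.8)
The plan is to establish the two non-inclusions separately, each witnessed by a concrete definable property that the other logic provably cannot express.

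First I would prove $\mathcal{DLR}_{\mathit{reg}}\not\leq\mathcal{DL}_{\mathrm{FU}_1}$. The natural witness is the transitive-reflexive closure operator, which $\mathcal{DLR}_{\mathit{reg}}$ has via $\mathcal{E}^*$ but $\mathcal{DL}_{\mathrm{FU}_1}$ lacks entirely. Concretely, consider a binary role $R$ and the $\mathcal{DLR}_{\mathit{reg}}$-concept $\exists R^*.A$ asserting that $A$ is reachable from the current point along an $R$-path. I would argue this is not expressible in $\mathcal{DL}_{\mathrm{FU}_1}$. Since $\mathcal{DL}_{\mathrm{FU}_1}$ is equi-expressive with $\mathrm{FU}_1$ by Theorem~\ref{timegoesbytheorem}, and $\mathrm{FU}_1$ collapses to $\mathrm{FO}^2$ over binary vocabularies by Proposition~\ref{firstproof}, it suffices to show reachability is not $\mathrm{FO}^2$-definable. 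This is a classical fact provable by a straightforward Ehrenfeucht--Fra\"{i}ss\'{e} (two-pebble) argument: take two long directed $R$-paths, one where $A$ holds at the far end and one where it does not, both long enough that the Duplicator survives the fixed number of rounds corresponding to the quantifier rank of the putative formula.

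Second I would prove $\mathcal{DL}_{\mathrm{FU}_1}\not\leq\mathcal{DLR}_{\mathit{reg}}$. Here I would exploit that $\mathcal{DLR}_{\mathit{reg}}$ extracts information from higher-arity roles only through the binary projections $\mathcal{R}_{|\$i,\$j}$ and the components $(\$i/n:C)$, so it cannot detect a genuinely three-dimensional coincidence. A good witness is the $\mathrm{FU}_1$-concept corresponding to $\exists x\exists y\exists z\, R(x,y,z)$ over a single ternary role, or more robustly the concept $\exists\mathcal{R}.(\top_1,\top_1)$ for a ternary $\mathcal{R}$; alternatively one uses a concept like $\exists y\,(R(x,y)\wedge S(x,y,x)\wedge P(y))$ that genuinely binds a variable across relations of different arity. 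The cleanest route is to produce two interpretations $\mathcal{I}_1,\mathcal{I}_2$ that agree on every $\mathcal{DLR}_{\mathit{reg}}$-concept (which one shows by structural induction, the two projections and components behaving identically on both) yet are distinguished by the chosen $\mathcal{DL}_{\mathrm{FU}_1}$-concept. I would design $\mathcal{I}_1$ and $\mathcal{I}_2$ so that all of their binary projections $R_{|\$i,\$j}$ and all slices $(\$i/n:C)$ coincide, while the ternary relation itself differs (nonempty in one, empty in the other), forcing agreement of every $\mathcal{DLR}_{\mathit{reg}}$-concept while the $\mathcal{DL}_{\mathrm{FU}_1}$-concept separates them.

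The main obstacle will be the second non-inclusion: rigorously verifying by induction that two structures agreeing on all binary projections and slices also agree on \emph{every} $\mathcal{DLR}_{\mathit{reg}}$-concept, including those built using $\circ$, $\cup$, $\cdot^*$ and the number restrictions $(\leq k\,[\$i]\mathcal{R})$. The delicate point is that the regular operators act on the \emph{binary} relations obtained after projection, so I must choose $\mathcal{I}_1,\mathcal{I}_2$ so that not only do the atomic projections agree but this agreement is preserved under all binary role constructors; ensuring this while still separating the structures with a higher-arity $\mathrm{FU}_1$-concept is where the construction must be done carefully. I expect the first non-inclusion to be routine once the reachability-non-definability lemma for $\mathrm{FO}^2$ is invoked.
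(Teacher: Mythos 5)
Your first non-inclusion is essentially correct and uses the same witness as the paper: the concept $\exists R^*.A$. Your route (equi-expressivity with $\mathrm{FU}_1$, collapse to $\mathrm{FO}^2$ over binary vocabularies, then a two-pebble game on long paths) is valid; the paper gets there faster by simply noting that reachability is not first-order definable at all, hence not definable in the first-order fragment $\mathrm{FU}_1$.

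The second non-inclusion is where your proposal has a genuine gap, for two concrete reasons. First, your witness $\exists\mathcal{R}.(\top_1,\top_1)$ for a ternary $\mathcal{R}$ \emph{is} expressible in $\mathcal{DLR}_{\mathit{reg}}$: it is exactly $\exists[\$1]\mathcal{R}$. Your premise that $\mathcal{DLR}_{\mathit{reg}}$ accesses higher-arity roles only through the binary projections $\mathcal{R}_{|\$i,\$j}$ and the slices $(\$i/n:C)$ overlooks the constructors $\exists[\$i]\mathcal{R}$ and $(\leq k\,[\$i]\mathcal{R})$, which quantify directly over the tuples of a (possibly complex) $n$-ary role term. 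Second, the pair of interpretations you plan to build --- agreeing on all binary projections while the ternary relation is nonempty in one and empty in the other --- cannot exist: an empty relation has empty projections and a nonempty one has nonempty projections, so the two structures would already be separated by $\exists[\$1]R$. The paper separates the logics in this direction by a different invariance property altogether: every $\mathcal{DLR}_{\mathit{reg}}$-concept is preserved under disjoint unions of interpretations (role negation there is relativized to $\top_n$, and all constructors are local to a connected part), whereas $\mathcal{DL}_{\mathrm{FU}_1}$ negates roles against the full $\Delta^{n+1}$. The concept $\neg\exists(\neg R).A$ is true at the unique point of a one-element reflexive $A$-model but false everywhere in the disjoint union of two copies of that model, which finishes the argument. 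If you want to keep a higher-arity flavour, the global statement that a ternary $R$ is nonempty (expressible in $\mathcal{DL}_{\mathrm{FU}_1}$ using the universal role $\neg(\varepsilon\cap\neg\varepsilon)$) also works as a witness --- but the reason it is not $\mathcal{DLR}_{\mathit{reg}}$-expressible is again disjoint-union invariance, not any inability of $\mathcal{DLR}_{\mathit{reg}}$ to detect ternary tuples.
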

\begin{proof}
It is easy to see that $\mathcal{DLR}_{\mathit{reg}}$ is
closed under disjoint copies such that if $C^{\mathcal{I}} = U$
for some $\mathcal{DLR}_{\mathit{reg}}$-concept $C$,
then $C^{{\mathcal{I}_1+\mathcal{I}_2}} = U_1\cup U_2$,
where $\mathcal{I}_1+\mathcal{I}_2$
consists of two disjoint copies of $\mathcal{I}$ and 
obviously $U_1$ and $U_2$
are the related copies of $U$.
Because of the free use of role negation in $\mathcal{DL}_{\mathrm{FU_1}}$,
the same does not hold in that logic. For example
the $\mathcal{DL}_{\mathrm{FU}_1}$-concept $\neg\exists(\neg R).A$,
where $R$ is a binary role, is
satisfied in an interpretation consisting of a
single element $u$ that
satisfies $A$ and connects to itself via $R$. This interpretation
together with a disjoint copy of itself does not satisfy $\neg\exists(\neg R).A$.
Thus $\mathcal{DL}_{\mathrm{FU_1}}$ is not contained in $\mathcal{DLR}_{\mathit{reg}}$.
For the converse, it suffices to observe that $\mathcal{DL}_{\mathrm{FU_1}}$
cannot define the concept $\exists(R^*).A$. It is well known
that this property is not first-order expressible, and thus it is not
definable in $\mathcal{DL}_{\mathrm{FU_1}}$.
\end{proof}
We finish up the current section by identifying a
maximal fragment of $\mathcal{DLR}_{\mathit{reg}}$ that
embeds into $\mathcal{DL}_{\mathrm{FU}_1}$.
What exactly we mean
by maximality  in this context
will become clear below. 
Let $\dlzero$ denote the
fragment of $\mathcal{DLR}_{\mathit{reg}}$ without 
Kleene star and counting, i.e., $\dlzero$ is
obtained by the grammar that drops the terms $\mathcal{E}^*$
and $(\leq k\, [\$i]\mathcal{R})$
from the grammar of $\mathcal{DLR}_{\mathit{reg}}$.
For each \emph{positive} integer $k$, we
let $\dlzerok$ denote the system we obtain if we
add the terms $(\leq k\, [\$i]\mathcal{R})$
(with each arity for $\mathcal{R}$
and each related $i$ included) to  the grammar of $\mathcal{DLR}_{\mathit{reg}}^0$.
(Note that $(\leq 0\, [\$i]\mathcal{R})$
is equivalent to $\neg \exists[\$i]\mathcal{R}$.)
Similarly, we let $\dlzerostar$ be the logic we obtain  by
adding the term $\mathcal{E}^*$ to the
grammar of $\mathcal{DLR}_{\mathit{reg}}^0$.
%

%
%Notice that $(\leq k\, [\$i]\mathcal{R})$
%
%denotes several operators since we can vary $i$
%
%and the arity of $\mathcal{R}$.
%

%
We will show that while $\dlzero$
embeds into $\mathcal{DL}_{\mathrm{FU}_1}$ (Theorem \ref{secondlast}),
neither $\dlzerostar$ 
nor any of the logics $\dlzerok$ does (Theorem \ref{last}).
%
%In fact, for any $k\in\mathbb{Z}_+$, adding the
%
%operators $(\leq k\, [\$i])$ to the grammar of $\dlzero$
%
%leads to a logic orthogonal to $\mathcal{DL}_{\mathrm{FU}_1}$.
%
%Therefore $\dlzero$ is a maximal fragment of
%
%$\mathcal{DLR}_{\mathit{reg}}$ that embeds into $\mathcal{DL}_{\mathrm{FU}_1}$
%
%in the sense that no more operators can be added to $\dlzero$
%
%without losing the containment.
%
%The transitive reflexive closure of a relation is 
%
%not definable in first-order logic, so it is not at all
%
We already observed above that 
the operator $\cdot^*$ of $\mathcal{DLR}_{\mathit{reg}}$ is
inexpressible in $\mathcal{DL}_{\mathrm{FU_1}}$.
The fact that the number restriction operators $(\leq k\, [\$i]\mathcal{R})$ are
definable neither in $\mathcal{DL}_{\mathrm{FU}_1}$
nor in $\mathrm{U}_1$, as we shall prove, is somewhat
more surprising since $\mathrm{U}_1$
can do \emph{some} counting. However, as we already 
%
%example that a unary predicate has exactly $k$ elements,
%
discussed earlier, the counting ability of $\mathrm{U}_1$ is limited.
%
%that can be carried out in $\mathrm{U}_1$.
%

%
Finally, it is not entirely trivial
that we can indeed keep the composition
operator in $\dlzero$
and still embed this logic into $\mathcal{DL}_{\mathrm{FU}_1}$.
This is because the use of the composition operator often
requires the three-variable fragment of first-order logic,
and $\mathcal{DL}_{\mathrm{FU}_1}$ collapses to $\mathrm{FO}^2$ on
binary vocabularies.
\begin{theorem}\label{secondlast}
$\dlzero$ 
embeds into $\mathcal{DL}_{\mathrm{FU}_1}$.
\end{theorem}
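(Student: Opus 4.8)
The plan is to give a compositional, meaning-preserving translation of $\dlzero$ into $\mathrm{FU}_1$ and then invoke Theorem~\ref{timegoesbytheorem} to land in $\mathcal{DL}_{\mathrm{FU}_1}$; going through $\mathrm{FU}_1$ rather than translating directly is convenient because the coordinate-restriction role $(\$i/n:C)$ has no direct role constructor in $\mathcal{DL}_{\mathrm{FU}_1}$ but is trivial to express in the first-order syntax. Concretely, I would define by mutual recursion two maps: one sending each $\dlzero$-concept $C$ to an $\mathrm{FU}_1$-formula $C^*(x)$ with the single free variable $x$, and one sending each $n$-ary $\dlzero$-role $\mathcal{R}$ to a formula $\mathcal{R}^*(x_1,\dots,x_n)$ whose every multi-variable atom has variable set exactly $\{x_1,\dots,x_n\}$, so that uniformity is automatic. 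The role clauses are routine: $\top_n$ and atomic $R$ go to the corresponding $n$-ary atoms, $(\$i/n:C)$ goes to $\top_n(x_1,\dots,x_n)\wedge C^*(x_i)$, intersection goes to conjunction, and $\neg\mathcal{R}$ goes to $\top_n(x_1,\dots,x_n)\wedge\neg\mathcal{R}^*(x_1,\dots,x_n)$ (negation taken relative to $\top_n$). The concept clauses for $\top_1$, atomic concepts, $\neg$ and $\sqcap$ are immediate, and $\exists[\$i]\mathcal{R}$ becomes the one-dimensional formula obtained by placing $x$ in coordinate $i$ and existentially quantifying the remaining coordinates of $\mathcal{R}^*$.

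The only genuinely delicate constructor is $\exists\mathcal{E}.C$, where $\mathcal{E}$ is a binary relation expression built from $\varepsilon$ and the projections $\mathcal{R}_{|\$i,\$j}$ using composition and union. Here I would not attempt to define a standalone binary relation for $\mathcal{E}$ — that is exactly where composition forces three variables and where the collapse of $\mathcal{DL}_{\mathrm{FU}_1}$ to $\mathrm{FO}^2$ on binary vocabularies would block a direct encoding. Instead I would process $\exists\mathcal{E}.C$ by recursion on the structure of $\mathcal{E}$, using the semantic identities $\exists\varepsilon.C\equiv C$, $\exists(\mathcal{E}_1\cup\mathcal{E}_2).C\equiv (\exists\mathcal{E}_1.C)\vee(\exists\mathcal{E}_2.C)$, and crucially $\exists(\mathcal{E}_1\circ\mathcal{E}_2).C\equiv \exists\mathcal{E}_1.(\exists\mathcal{E}_2.C)$. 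The last identity pushes each composition into a nested existential rather than materializing a composite binary relation, so at every step one only ever quantifies over the coordinates of a single projection $\mathcal{R}_{|\$i,\$j}$; the base case $\exists\mathcal{R}_{|\$i,\$j}.C$ is then translated by placing $x$ in coordinate $i$, quantifying the other coordinates of $\mathcal{R}^*$, and asserting $C^*$ on the variable occupying coordinate $j$. Because $\dlzero$ has no Kleene star, $\mathcal{E}$ is a finite term and this unfolding terminates; and since $\varepsilon$ is always eliminated by one of the three identities, the output never actually needs an equality atom, which keeps us safely inside $\mathrm{FU}_1$.

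The main obstacle, therefore, is precisely the composition operator, and the crux of the argument is that in $\dlzero$ the binary expressions $\mathcal{E}$ occur \emph{only} in the guarded position $\exists\mathcal{E}.C$ — never under negation or intersection of roles — so they can always be absorbed into nested one-dimensional quantification. I would then close the argument with a straightforward structural induction verifying that each clause respects the $\mathrm{U}_1$ syntactic constraints (one free variable surviving each quantifier block, and uniformity, which holds because within a fixed-arity role all multi-variable atoms share the variable set $\{x_1,\dots,x_n\}$), and that $C^{\mathcal{I}}=\{u : \mathcal{I}\models C^*[u]\}$ for every interpretation $\mathcal{I}$ in the class under consideration, where the relations $\top_n$ are treated as atomic built-in roles available to both logics. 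Applying Theorem~\ref{timegoesbytheorem} to each resulting formula $C^*(x)$ yields the equivalent $\mathcal{DL}_{\mathrm{FU}_1}$-concept and completes the embedding.
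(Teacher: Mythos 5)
Your proposal is correct and follows essentially the same route as the paper: the same key identities (distributing $\exists\mathcal{E}.C$ over unions and rewriting $\exists(\mathcal{E}_1\circ\mathcal{E}_2).C$ as $\exists\mathcal{E}_1.\exists\mathcal{E}_2.C$) to absorb composition into nested one-dimensional quantification, followed by a mutually recursive translation of roles and concepts into $\mathrm{FU}_1$ and an appeal to Theorem~\ref{timegoesbytheorem}. The only difference is organizational --- the paper eliminates composition and union as a preprocessing normalization of the $\mathcal{DLR}_{\mathit{reg}}^0$ terms before translating, while you fold that elimination into the recursive clause for $\exists\mathcal{E}.C$ --- and your observation that $\varepsilon$ need never produce an equality atom is unnecessary caution, since $\mathrm{FU}_1$ permits equality in binary-atom positions anyway.
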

\begin{proof}
We begin by showing 
that we can eliminate the composition operator $\circ$
from $\dlzero$ altogether. Consider a
concept $D$ of $\mathcal{DLR}_{\mathit{reg}}^0$.
We first observe that we can use the
the standard identity
$R\circ (S\cup T) = (R\circ S)\cup (R\circ T)$
of relation algebra to obtain from $D$ an expression 
where the composition operators are on the ``atomic" level,
with the relational terms $\varepsilon$ and $\mathcal{R}_{|\$i,\$j}$ of
the grammar of $\mathcal{DLR}_{\mathit{reg}}$
regarded as atoms.
%
%regarded as atoms; we note that $C$
%
%can of course still contain $\cup$-operators subordinate to $\circ$-operators.
%
%
%
%Now consider a
%
%composition $(\mathcal{R}_{|\$i,\$j})_1\circ ...\circ (\mathcal{R}_{|\$i,\$j})_k$.
%
We then use the
equivalence
\[\exists\bigl(\mathcal{E}_1
\cup...\cup\mathcal{E}_m\bigr).\, C\, \equiv\, 
(\exists\mathcal{E}_1.C)\ \sqcup...\sqcup (\exists\mathcal{E}_m.C)\]
to obtain a disjunction of formulae $\exists\mathcal{E}_i. C$
where $\mathcal{E}_i$ is a
composition of ``atomic" terms $\mathcal{S}$.
%
%of type $\mathcal{R}_{|\$i,\$j}$ and $(\$i/n : C)$.
% 
To eliminate the composition operators from
the terms $\mathcal{E}_i = \mathcal{S}_1\circ...\circ\mathcal{S}_n$,
we use the equivalence
\[\exists\bigl(\, \mathcal{S}_1\circ...
\circ\mathcal{S}_n\, ).C\, \equiv\, 
\exists \mathcal{S}_1.\exists \mathcal{S}_2.\exists\mathcal{S}_3\ \ ...\ \
\exists\mathcal{S}_n. C.\]
%
%$$\exists\bigl(\, (\mathcal{R}_{|\$i,\$j})_1\circ...
%
%\circ(\mathcal{S}_{|\$i,\$j})_m\, ).C\, \equiv\, 
%
%\exists (\mathcal{S}_{|\$i,\$j})_1.
%
%\exists (\mathcal{S}_{|\$i,\$j})_2...\exists($$
%
%
%
Thus we can eliminate instances of $\circ$
from $\mathcal{DLR}_{\mathit{reg}}^0$.
%
%the cost of this being a blow-up in the lenth of the formula to
%
%be translated.
%

%
Next we note that all the remaining union operators are
also eliminable, using the 
equivalence \[\exists\bigl(\mathcal{E}_1
\cup...\cup\mathcal{E}_m\bigr).\, C\, \equiv\, 
(\exists\mathcal{E}_1.C)\ \sqcup...\sqcup (\exists\mathcal{E}_m.C).\]
We then show how to translate the
obtained formula (which is free of
union and composition operators) into $\mathcal{DL}_{\mathrm{FU}_1}$.
For presentational reasons, we
will translate the formula
into the first-order fragment $\mathrm{FU}_1$.
The syntax of $\mathcal{DLR}_{\mathit{reg}}^0$
without composition and union is
given by the grammar
\begin{align*}
\mathcal{R}\ &::=\ \top_n\ |\ R\ |\ (\$i/n:C)\ |\ \neg\mathcal{R}\ |\
(\mathcal{R}_1\cap\mathcal{R}_2)\\
\mathcal{E}\ &::=\ \varepsilon\ |\
\mathcal{R}_{|\$i,\$j}\\
%
%|\ \ (\mathcal{E}_1\cup\mathcal{E}_2)\\
%
C\ &::=\ \top_1\ |\ A\ |\ \neg C\ |\ (C_1\sqcap C_2)
\ |\ \exists\mathcal{E}.C\ |\  \exists[\$i]\mathcal{R}\ 
\end{align*}
where $\mathcal{R}_{|\$i,\$j}$ with $i=j$ is not 
allowed; these are easy to eliminate. Our translation will be defined
with three translation operators $s,t,T$ that correspond to,
respectively,
the terms for $\mathcal{R},\mathcal{E},C$ above.
Each of these operators is parameterized by an
appropriate tuple of variables.
We first define $T$ as follows.
\begin{enumerate}
\item
$T[x](\top_1) := \top$ and $T[x](A) := A(x)$.
\item
$T[x](\neg C) := \neg T[x](C)$
and $T[x]( C_1\sqcap C_2 ) := T[x](C_1)\wedge T[x](C_2)$.
\item
$T[x](\exists\mathcal{E}.C) :=
\exists y\bigl(
t[x,y](\, \mathcal{E}\, )\wedge T[y]C\, \bigr)$, where $t$ is
the translation for terms $\mathcal{E}$ to be defined below.
\item
\begin{align*}
T[x](\exists[\$i]\mathcal{R}) & \\  := & 
\exists x_1...\exists x_{i-1}\exists x_{i+1}...\exists x_n\bigl(\,
s[x_1,...,x_{i-1},x,x_{i+1},...,x_n](\mathcal{R})\, \bigr),\end{align*}
where $s$ is a translation for $\mathcal{R}$
and $n$ is the arity of $\mathcal{R}$.
\end{enumerate}
We then define the operator $t$.
%
%
%
\begin{comment}
%
We first translate the $n$-ary relations
%
specified by the first part.
%
%
%
$$\mathcal{R}\ ::=\ \top_n\ |\ \mathcal{P}\ |\ (\$i/n:C)\ |\ \neg\mathcal{R}\ |\
%
(\mathcal{R}_1\cap\mathcal{R}_2)$$
%
%
of the grammar of $\dlzero$.
%
The translation operator $t$ takes a tuple $[x_1,...,x_n]$ of
%
first-order variables as parameters.
%
\end{comment}
%
%
%
%
%
%
%
\begin{enumerate}
\item
$t[x,y](\varepsilon) := x = y$.
\item
$t[x,y](\mathcal{R}_{|\$i,\$j}) :=
\exists \overline{z}\bigl( s[\overline{u}](\mathcal{R}))$,
where $\exists\overline{z}$ quantifies existentially
each of the variables $x_1,...,x_n$ except for $x_i$ and $x_j$,
and where $\overline{u}$ is obtained from the tuple
$(x_1,...,x_n)$ by replacing $x_i$ by $x$ and $x_j$ by $y$.
Here $n$ is the arity of the relation $\mathcal{R}$
and $s$ is the translation for $\mathcal{R}$.
%
%
%
%\item
%$t(\mathcal{E}_1\cup\mathcal{E}_2) = 
%
%t[x,y](\mathcal{E}_1)\vee t[x,y](\mathcal{E}_2)$
%
\end{enumerate}
We finally define the operator $s$ as follows.
\begin{enumerate}
\item
$s[x_1,...,x_n](\top_n) := \top_n(x_1,...,x_n)$
and $s[x_1,...,x_n](R) := R(x_1,...,x_n)$ for
atomic roles $R$ and the built-in relation $\top_{n}$.
\item
$s[x_1,...,x_n](\, (\$i/n:C)\, ) := T[x_i](C)
\wedge\top_n(x_1,...,x_n)$, where $T$ is
the translation for $C$.
\item
$s[x_1,...,x_n](\neg\mathcal{R}) :=
\top_n(x_1,...,x_n)\wedge \neg\, s[x_1,...,x_n](\mathcal{R})$.
\item
$s[x_1,...,x_n](\, \mathcal{R}_1\cap\mathcal{R}_2\, )
:= s[x_1,...,x_n](\mathcal{R}_1)\wedge s[x_1,...,x_n](\mathcal{R}_2)$.
\end{enumerate}
The translated formula is now easily modified to a
formula of $\mathrm{FU}_1$.
This involves shifting the quantifiers introduced in clause 2 of
the translation  $t[x,y]$.
%
%
%
\begin{comment}
%
%
%
Consider a formula
%
$\exists \mathcal{R}_{|\$i,\$j}.\exists\mathcal{S}_{|\$i,\$j}.C$,
%
Translate this to first-order logic as follows.\\
%
%
%
\noindent
$\exists \mathcal{R}_{|\$k,\$l}.\exists\mathcal{S}_{|\$i,\$j}.C$\\
%
$\exists x_1...\exists x_{i-1}\exists x_{i+1}...
%
\exists x_{j-1}\exists x_{j+1}
%
\exists x_k\bigl(\, \mathcal{R}(x_1...x_n) \wedge C(x_j)\, \bigr)\\
%
\exists \mathcal{R}_{|\$k,\$l}.
%
\exists x_1...\exists x_{i-1}\exists x_{i+1}...
%
\exists x_k\bigl(\, \mathcal{S}(x_1...x_n) \wedge C(x_j)\, \bigr)\\
%
\exists \mathcal{R}_{|\$k,\$l}
%
\exists x_1...\exists x_{i-1}\exists x_{i+1}...
%
\exists x_k\bigl(\, \mathcal{S}(x_1...x_n) \wedge C(x_j)\, \bigr)\\
%
\exists \mathcal{R}_{|\$k,\$l}
%
\beta(x_l)\\
%
\exists x_1...\exists x_{k-1}\exists x_{k+1}...
%
\exists x_k\bigl(\, \mathcal{R}(x_1,...,x_n)\wedge \beta(x_l)\, \bigr)\\
%
:= \beta'(x_k)$.
%
%
%
\end{comment}
%
%
%
%\flushright{\qed}
%
\end{proof}
We then show that 
none of the operators of $\mathcal{DLR}_{\mathit{reg}}$
missing from $\dlzero$ could be added to $\dlzero$
without losing the embedding into $\mathcal{DL}_{\mathrm{FU}_1}$.
By an operator we here mean $\cdot^*$ 
and each term $(\leq k\, [\$i]\mathcal{R})$
with $k\in\mathbb{Z}_+$. Note that 
for a fixed $k$, the term $(\leq k\, [\$i]\mathcal{R})$
strictly speaking denotes a collection of
operators, because we could vary $i$ and the arity of $\mathcal{R}$.
Thus a more fine-grained analysis than the one below
could be given. We ignore this issue
for the sake of simplicity.
%

%
%\begin{comment}
%
\begin{theorem}\label{last}
$\dlzerostar$
and $\dlzerok$ for each $k\in\mathbb{Z}_+$
are all incomparable with $\mathcal{DL}_{\mathrm{FU}_1}$
\end{theorem}
%
%\end{comment}
%
%
%
\begin{proof}
We already observed in the proof of
Proposition \ref{propositionsomething} 
that $\mathcal{DL}_{\mathrm{FU}_1}$ cannot
define the concept $\exists(R^*).A$
and that $\mathcal{DLR}_{\mathit{reg}}$ cannot
define $\neg\exists(\neg R).A$,
where $\neg$ is the full negation of $\mathcal{DL}_{\mathrm{FU}_1}$.
Thus it now suffices to show that for each $k\in\mathbb{Z}_+$,
the concept $(\leq k\, [\$2] R)$ is
not expressible in $\mathcal{DL}_{\mathrm{FU}_1}$.
Here $R$ is a binary relation.
In the proof of Theorem \ref{generalexpressivity},
we already dealt with the special case where $k = 1$:
if $\varphi(x)$ was an $\mathrm{FU}_1$-formula defining the 
concept $(\leq 1\, [\$2] R)$, then the $\mathrm{FU}_1$-sentence
$\forall x\varphi(x)$ would define that the in-degree of $R$ is at most one.
Thus we can now fix a $k\geq 2$
and define two interpretations, one consisting of $k+1$
copies of the clique of size $k$ and the other one of $k$
copies of the clique of size $k+1$.
(Recall that a clique is a structure where the
binary relation $R$ is the total
relation with the reflexive loops removed).
We have prepared the setting in such a way that it is now
easy to show, using once again the
%
%Ehrenfeucht-Fra\"{i}ss\'{e}
%
EF-game for $\mathrm{U}_1$ (defined in \cite{kuusistokieronski2015}),
that the two structures satisfy exactly the same $\mathrm{U}_1$-sentences.
%
%Thus they cannot be separated by $\mathcal{DL}_{\mathrm{FU}_1}$,
%
However, the concept $(\leq k-1\, [\$2]R)$ is satisfied by 
every element in the first structure and by none of the elements of 
the second one. Thus no $\mathrm{U}_1$-formula is
equivalent to $(\leq k-1\, [\$2]R)$, because if $\varphi(x)$
was equivalent to $(\leq k-1\, [\$2]R)$,
the $\mathrm{U}_1$-sentence $\exists x\varphi(x)$
would be satisfied by the first structure but not the  second one.
\end{proof}

Logics building on the two-variable logic, $\mathrm{U}_1$ and
$\mathcal{DL}_{\mathrm{FU}_1}$ soon become \emph{highly} undecidable if
extended with fixed points or even transitive closure. 
Notable exceptions to this pattern can be obtained via
the game-theoretic recursion (see, e.g., the article
\cite{compl} for the definitions, and note that here we are 
mainly---but not exclusively---interested in the 
unbounded semantics). The related logics have many nice properties.
For example validity for the so 
extended two-variable logic is
complete for co-NEXPTIME \cite{compl}, and we get recursive enumerability of
validities even for the extension of FO with game-theoretic recursion.

The logics with game-theoretic recursion constitute an  
interesting collection of systems with self-referential statements. 
Note that the formula $C\neg C$ gives a formulation of the liar paradox. 
Within the standard semantics (cf. \cite{compl}), this is an
indeterminate statement. To consider the paradox generally, let LS stand for the informal
claim ``this sentence is not true''. Note that LS can be interpreted as
the claim ``this sentence is not $T_1$'' where $T_1$ refers to
some reasonably ordinary notion of truth,
let us call it ``first-level truth''. This could be, for
example, the ``well-founded truth'' \cite{rules}, but various notions---formal and
informal---are fine here.  With this interpretation, we can 
associate LS with a second-level truth value $U_2$. Here $U_2$ can mean, for 
example, ``second-level true'' or ``second-level false'', or
perhaps ``not second-level true'' or ``not second-level false'', to give a 
few options out of many. In each case, we
avoid the usual paradox associated with the 
claim ``this sentence is not true''. This is because $U_2$ 
operates on a different level than $T_1$, that is, gives a 
\emph{different sense} of truth. 
So ``second-level true''
means true on some other level (or in a
different sense) than ``first-level true'' $T_1$, and similarly for falsity and
other truth notions.   
Altogether, the above discussion 
gives one suggestion for a resolution of the paradox
associated with ``this sentence is not true''. In general,
there is no paradox in stating, to give one example, that it is not
second-level true that ``this sentence is not first-level true''.
It is not so central what second-level truth value we give 
``this sentence is not true'', the point is that we can give it
any second-level truth value, as it does not interact with a first-level
truth values. 
The same analysis gives---for similar reasons---a similar
suggestion for resolving the paradox with ``this sentence is false''
(where we talk about falsity instead of not being true).

Let us get back to the game-theoretic recursion and give a 
related example. 
Considering the standard semantics for $C\neg C$, we can take
``first-level true'' be the well-founded truth 
mentioned above, meaning Eloise having a winning strategy.
We can also take ``first-level false'' to
mean a similar, well-founded notion meaning that 
Abelard has a winning strategy. 
Since $C\neg C$ is indeterminate,
the sentence $C\neg C$ is neither ``first-level true''
nor ``first-level false''. Thus it is 
not ``first-level true'', and it is not ``first-level false''.
However, to give one possibility, $C\neg C$ can,
without problems, be considered ``second-level false'', 
with intuition would be that Eloise does not have a winning strategy, i.e.,
``second-level false'' is equated with not being ``first-level true''.
On the other hand, giving a somewhat different interpretation to second-level
truth values, $C\neg C$ can also be considered to 
\emph{not} be ``second-level false''
since Abelard has no winning strategy (so here we 
equate ``second-level false'' with 
``first-level false'' on the formal level). And so on. The
general point is simply that we
resolve the paradox by suggesting to give new truth values that 
operate on different levels (or in different senses). This is often
natural, as for example ``well-founded truth'' (a possible interpretation for ``first-level true'') and similarly ``well-founded falsity''
are naturally 
different from many second-level truth values that talk about 
indeterminacy due to neither player having a winning strategy. 
And, in the general case, since the truth values on different levels do not interact (in any forced way), 
there is no paradox. There is indeed no problem stating (for example) that the 
sentence ``this sentence is not first-level true'' is 
not second-level true. Indeed, we can read $C\neg C$ as stating
that ``this sentence is not first-level true'' and then give it
the semantic judgement that it is not ``second-level true'', where ``second-level
true'' means Eloise having a winning strategy. And so on, with many 
interesting different options. 
The general resolution simply says that we can give a second-level truth value in more or less any way we like. Some options can be more natural than others, but there is no
inconsistency. Indeed, defining what positive real numbers are extends 
the definition of positive integers, and that generalization is natural but not
something that the properties of integers force. We could extend the
definition in other, alternative ways without direct inconsistency.

There are many suggested resolutions of the general paradox in
the literature. In particular, we note
that there is a significant difference between Tarski's
suggested resolution and the suggestion discussed above. In Tarski's 
suggestion, the liar sentence is denied the
role of an admissible sentence, as the
truth predicate should not apply on any level $n$ to sentences on the same level $n$. 
This kind of a constraint is not used in the suggestion we
discussed above. Note also
that our suggestion indeed bears a nice link of the syntax and 
semantics of the $C$-operator as used in \cite{compl} and the 
articles leading to that paper (see the introduction of
\cite{compl} for more information). 
In general, the framework with $C$ relates nicely to many issues on
self-reference. However, it of course does not touch all issues. For example, in
the related logics (so far), $\varphi$ is equivalent to $\varphi\wedge\varphi$.
This breaks with sentences such as ``this sentence has less
than fifty letters'' which is true but its conjunction with itself is---at least under one natural intepretation---not true.

Concerning the syntax of logics with the game-theoretic looping operator $C$, it is
sometimes convenient to write $L_C$ for the non-atomic occurrences of $C$
while atomic operators are kept as they are. Now $C\neg C$ becomes $L_C \neg C$.
This formula fits both the predicate logic as well as the 
description logic context. Concerning the latter 
family of logics, for
example the formula $C \exists R.(A \sqcup C)$ becomes $L_C \exists R. (A \sqcup C)$.
Eloise wins the game for this formula precisely from those points
from where we can reach a point satisfying the atomic
concept $A$. Note that $C$ is here of
course an iterator variable, not a metasymbol denoting a concept. To 
avoid confusion, it may indeed be convenient to use some other letter 
for iterator variables than $C$ in the context of description logics.

\bibliography{uodf}

\begin{thebibliography}{10}

\bibitem{IEEEonedimensional:andreka}
Hajnal Andr\'eka, Johan van Benthem, and Istvan N\'{e}meti.
\newblock Modal languages and bounded fragments of predicate logic.
\newblock {\em Journal of Philosophical Logic}, 27(3):217--274, 1998.

\bibitem{IEEEonedimensional:barany}
Vince B{\'a}r{\'a}ny, Balder ten Cate, and Luc Segoufin.
\newblock Guarded negation.
\newblock In {\em Proc. of ICALP (2)}, pages 356--367, 2011.

\bibitem{IEEEonedimensional:kieronski}
Saguy Benaim, Michael Benedikt, Witold Charatonik, Emanuel Kiero\'{n}ski,
  Rastislav Lenhardt, Filip Mazowiecki, and James Worrell.
\newblock Complexity of two-variable logic on finite trees.
\newblock In {\em Proc. of ICALP (2)}, pages 74--88, 2013.

\bibitem{BMSS09}
M.~Boja\'{n}czyk, A.~Muscholl, T.~Schwentick, and L.~Segoufin.
\newblock Two-variable logic on data trees and {XML} reasoning.
\newblock {\em Journal of the ACM}, 56(3), 2009.

\bibitem{calvanese}
Diego Calvanese, Giuseppe {De Giacomo}, and Maurizio Lenzerini.
\newblock On the decidability of query containment under constraints.
\newblock In {\em Proc. of PODS}, pages 149--158, 1998.

\bibitem{GKV97}
E.~Gr{\"a}del, P.~Kolaitis, and M.~Vardi.
\newblock On the decision problem for two-variable first-order logic.
\newblock {\em Bulletin of Symbolic Logic}, 3(1):53--69, 1997.

\bibitem{graadejees}
Erich Gr{\"{a}}del.
\newblock On the restraining power of guards.
\newblock {\em Journal of Symbolic Logic}, 64(4):1719--1742, 1999.

\bibitem{IEEEonedimensional:gradel}
Erich Gr{\"a}del, Martin Otto, and Eric Rosen.
\newblock Two-variable logic with counting is decidable.
\newblock In {\em Proc. of LICS}, pages 306--317, 1997.

\bibitem{kuusistohella2014}
Lauri Hella and Antti Kuusisto.
\newblock One-dimensional fragment of first-order logic.
\newblock In {\em Proc. of AiML}, pages 274--293, 2014.

\bibitem{IEEEonedimensional:henkin}
Leon Henkin.
\newblock Logical systems containing only a finite number of symbols.
\newblock {\em Presses De l'Universit\'{e} De Montr\'{e}al}, 1967. 

\bibitem{compl}
Reijo Jaakkola and Antti Kuusisto.
\newblock First-order logic with self-reference.
\newblock arXiv:2207.07397, 2022.

\bibitem{KMP-HT14}
E.~Kiero\'{n}ski, J.~Michaliszyn, I.~Pratt-Hartmann, and L.~Tendera.
\newblock Two-variable first-order logic with equivalence closure.
\newblock {\em SIAM Journal on Computing}, 43(3), 2014.

\bibitem{kuusistokieronski2014}
Emanuel Kiero\'{n}ski and Antti Kuusisto.
\newblock Complexity and expressivity of uniform one-dimensional fragment with
  equality.
\newblock In {\em Proc. of MFCS (1)}, pages 365--376, 2014.

\bibitem{kuusistokieronski2015}
Emanuel Kiero\'{n}ski and Antti Kuusisto.
\newblock Uniform one-dimensional fragments with one equivalence relation.
\newblock In {\em Proc. of CSL}, pages 597--615, 2015.

\bibitem{kierohullumpi}
Emanuel Kiero\'{n}ski and Martin Otto.
\newblock Small substructures and decidability issues for first-order logic
  with two variables.
\newblock {\em Journal of Symbolic Logic}, 77(3):729--765, 2012.

\bibitem{kierohullu}
Emanuel Kiero\'{n}ski and Lidia Tendera.
\newblock On finite satisfiability of two-variable first-order logic with
  equivalence relations.
\newblock In {\em Proc. of LICS}, pages 123--132, 2009. 

\bibitem{rules}
Antti Kuusisto.
\newblock The power of clockings. 
\newblock arXiv:2201.08353v2, 2023. 

\bibitem{IEEEonedimensional:libkin}
Leonid Libkin.
\newblock {\em Elements of Finite Model Theory}.
\newblock Springer, 2004.

\bibitem{carsten}
Carsten Lutz, Ulrike Sattler, and Stephan Tobies.
\newblock A suggestion for an n-ary description logic.
\newblock In {\em Proc. of DL'99}, 1999.

\bibitem{twovariablemodal}
Carsten Lutz, Ulrike Sattler, and Frank Wolter.
\newblock Modal logic and the two-variable fragment.
\newblock In {\em Proc. of {CSL}}, pages 247--261, 2001.

\bibitem{IEEEonedimensional:mortimer}
M.~Mortimer.
\newblock On languages with two variables.
\newblock {\em Zeitschrift f\"{u}r Mathematische Logik und Grundlagen der
  Mathematik}, 21(1):135--140, 1975.

\bibitem{PST97}
L.~Pacholski, W.~Szwast, and L.~Tendera.
\newblock Complexity of two-variable logic with counting.
\newblock In {\em Proc. of LICS}, pages 318--327, 1997.

\bibitem{IEEEonedimensional:pratth}
Ian Pratt-Hartmann.
\newblock Complexity of the two-variable fragment with counting quantifiers.
\newblock {\em Journal of Logic, Language and Information}, 14(3):369--395,
  2005.

\bibitem{renate}
Renate~A. Schmidt and Dmitry Tishkovsky.
\newblock Using tableau to decide description logics with full role negation
  and identity.
\newblock {\em {ACM} Transactions on Computational Logic}, 15(1), 2014.

\bibitem{Schmolze}
James~G. Schmolze.
\newblock Terminological knowledge representation systems supporting n-ary
  terms.
\newblock In {\em Proc. of KR'89}, pages 432--443, 1989.

\bibitem{Sco62}
D.~Scott.
\newblock A decision method for validity of sentences in two variables.
\newblock {\em Journal of Symbolic Logic}, 27, 1962.

\bibitem{IEEEonedimensional:tendera}
Wieslaw Szwast and Lidia Tendera.
\newblock $\mathrm{FO}^2$ with one transitive relation is decidable.
\newblock In {\em Proc. of STACS}, pages 317--328, 2013.

\end{thebibliography}

\begin{comment}
\newpage 

\section{Appendix}

\vspace{1.7cm}

\noindent
\textbf{Proof of Proposition \ref{firstproof}}
\begin{proof}
We restrict attention 
to vocabularies with at most 
binary relations and show that $\mathrm{FO}^2$
and $\mathrm{FU}_1$ are equiexpressive.
It is easy to observe that $\mathrm{FO}^2$ is in fact a
syntactic fragment of $\mathrm{FU}_1$.
Thus we only need to consider
expressing $\mathrm{FU}_1$-properties in $\mathrm{FO}^2$.
The nontrivial part in translating $\mathrm{FU}_1$
into $\mathrm{FO}^2$ involves dealing with the formulae of
the type $\exists x_1...\exists x_k\, \varphi$,
where $\varphi$ is a Boolean combination of formulae of
the following kind.
\begin{enumerate}
\item
Binary atoms of the type $R(x_i,x_j)$ and of the type $x_i=x_j$.
Each of these atoms binds
\emph{exactly} the same two variable symbols. (This is
due to  the uniformity condition of $\mathrm{FU}_1$.)
\item
Formulae $\chi(x_i)$ with at most one free variable.
\end{enumerate}
We next \emph{sketch} how to translate such a formula
$\exists x_1...\exists x_k\ \varphi$ into $\mathrm{FO}^2$.
First put $\varphi$ into disjunctive normal form and then
distribute the existential quantifier block over the disjunctions.
Therefore
$$\exists x_1...\exists x_k\varphi
\equiv (\exists x_1...\exists x_k\psi_1)\vee...
\vee(\exists x_1...\exists x_k\psi_n),$$
where each $\psi_i$ is now a \emph{conjunction} of
formulae of the following kind.
\begin{enumerate}
\item
Binary atoms of the type $R(x_i,x_j)$ and of the type $x_i=x_j$,
and also negations of such atoms. Each of these atoms binds
\emph{exactly} the same two variable symbols,
due to  the uniformity condition of $\mathrm{FU}_1$.
\item
Formulae $\chi(x_i)$ with at most one free variable.
\end{enumerate}
It is now easy to see how to translate such a
disjunct $\exists x_1...\exists x_k\psi_i$ into $\mathrm{FO}^2$.
For example, if
$\exists x_1...\exists x_k\psi_i$ is the
formula
$$\exists x_1 \exists x_2\bigl( R(x_0,x_1)\wedge \neg S(x_0,x_1)
\wedge \chi_0(x_0)\wedge \chi_1(x_1)\wedge \chi_2(x_2)\,\bigr),$$
it translates first into  
$$\exists x_1\bigl( R(x_0,x_1)\wedge \neg S(x_0,x_1)
\wedge \chi_0(x_0)\wedge \chi_1(x_1)\,\bigr)
\wedge \exists x_2\chi_2(x_2),$$
and since we know (by the induction hypothesis) how  to
translate the formulae $\chi_i(x_i)$ into
equivalent $\mathrm{FO}^2$-formulae $\chi_i^*(x_i)$,
we thereby obtain the $\mathrm{FO}^2$-formula
$$\exists x_1\bigl( R(x_0,x_1)\wedge \neg S(x_0,x_1)
\wedge \chi_0^*(x_0)\wedge \chi_1^*(x_1)\,\bigr)
\wedge \exists x_0\chi_2^*(x_0).$$
Therefore it is easy to observe 
that $\mathrm{FU}_1$ indeed translates into $\mathrm{FO}^2$.\qed
\end{proof}
%

%\newpage 

%
\noindent
\textbf{Proof of Theorem} \ref{timegoesbytheorem}
\begin{proof}
The proof of this theorem is, up to an extent, similar
to that of the proof of Theorem \ref{firstproof} above.
The claim of the current theorem extends the 
observation from \cite{twovariablemodal} that
Boolean modal logic with the inverse operator and
with the identity modality, is expressively complete
for $\mathrm{FO}^2$. In the higher arity context of
the current theorem, the surjections in $\mathrm{SRJ}$
play a role analogous to the inverse operator
for binary relations.
To establish the claim of Theorem \ref{timegoesbytheorem},
we need to show that for each 
concept of $\mathcal{DL}_{\mathrm{FU}_1}$ there exists an
equivalent $\mathrm{FU}_1$-formula $\varphi(x)$ with
one free variable $x$, and for each $\mathrm{FU}_1$-formula $\psi(x)$,
there exists an equivalent concept of $\mathcal{DL}_{\mathrm{FU}_1}$.
The translation from $\mathcal{DL}_{\mathrm{FU}_1}$
into $\mathrm{FU}_1$ is
an obvious extension of the standard
translation of modal logic, so we only
discuss the translation from $\mathrm{FU}_1$
into $\mathcal{DL}_{\mathrm{FU}_1}$.
Let $X := \{x_1,...,x_m\}$ be a
set of $m\geq 2$ distinct variables.
We next define a translation
$r[x_1,...,x_m]$ that translates
Boolean combinations of $X$-atoms of $\mathrm{FU}_1$
into $m$-ary $\mathcal{DL}_{\mathrm{FU}_1}$ roles.
\begin{enumerate}
\item
Let $R(x_{i_1},...,x_{i_k})$ be an $X$-atom.
Then $(R(x_{i_1},...,x_{i_k}))^{r[x_1,...,x_m]} = \sigma R$,
where $\sigma:[k]\rightarrow[m]$ is
the surjection such that $\sigma(j) = i_j$.
For example, $(Q(y,x,x))^{r[x,y]} = \sigma Q$
for $\sigma = \{ (1,2),(2,1),(3,1) \}$.
\item
$(\mathcal{T}\wedge\mathcal{S})^{r\overline{x}}
:= \mathcal{T}^{r\overline{x}}\wedge\mathcal{S}^{r\overline{x}}$
and $(\neg \mathcal{T})^{r\overline{x}}
:= \neg (\mathcal{T}^{r\overline{x}})$,
where $\overline{x}$ denotes $[x_1,...,x_m]$.
\end{enumerate} 
We also define $(x=y)^{r[x,y]} :=\varepsilon$
and $(y=x)^{r[x,y]} := \varepsilon$.
With the translation $r$ defined,
we next define a translation $\cdot^*$
that translates formulae of $\mathrm{FU}_1$
into equivalent $\mathcal{DL}_{\mathrm{FU}_1}$-concepts.
We note that due to the 
way the syntax of $\mathrm{FU}_1$ was
defined, $\mathrm{FU}_1$-formulae
can have subformulae that are \emph{not}
themselves $\mathrm{FU}_1$-formulae.
The operator $\cdot^*$ will
be defined inductively only for
each formula constrution rule of $\mathrm{FU}_1$.
Thus we obtain rules for atoms, Boolean connectives and
existential block quantification.
We let $\top$ and $\bot$ translate to themselves and
define $(P(x))^*:= P$, where $P$ on the right hand side is now an atomic concept.
For unary atoms $R(x,...,x)$ with the free variable $x$, we define $(R(x,...,x))^*
:= \exists(\varepsilon\cap \sigma(R)).\top$,
where $\sigma$ is any surjection from $[\mathit{arity}(R)]$ onto $\{1,2\}$.
We let $(\neg\varphi(x))^* := \neg((\varphi(x))^*)$
and $(\varphi(x)\wedge\psi(x))^* = (\varphi(x))^*\wedge(\psi(x))^*$.
We note here that for example
the formula $P(x)\wedge Q(y)$ with two free variables is a
legitimate $\mathrm{FU}_1$-formula, but it is easy to check that
we do not need a translation rule for such formulae because
we are only interested in $\mathrm{FU}_1$-formulae
with at most one free variable.
This holds despite the formula $P(x)\wedge Q(y)$ and similar formulae
can appear as a subformulae in $\mathrm{FU}_1$-formulae with at
most one free variable;
consider for example the formula $\exists y( R(x,y)\wedge P(x)\wedge Q(y))$.
We will next \emph{sketch} 
how to translate an $\mathrm{FU}_1$-formula $\psi(x_0) :=
\exists x_1...\exists x_k\, \varphi$ with the free variable $x_0$.
We first put the subformula $\varphi$
into disjunctive normal form $\psi_1\vee...\vee\psi_p$,
where each $\psi_i$ is a conjunction.
Note that due to  the syntax of $\mathrm{FU}_1$, each of
the higher arity atoms in $\psi_1\vee...\vee\psi_p$ has
exactly the same set of first-order variables.
We have $\varphi(x_0) \equiv \exists x_1...\exists x_k(
\psi_1\vee...\vee\psi_p)$,
and thus $\varphi(x_0) \equiv (\exists x_1...\exists x_k\psi_1)
\vee...\vee(\exists x_1...\exists x_k\psi_p)$.
Consider a single disjunct
$\varphi_i(x_0) := \exists x_1...\exists x_k\psi_i$.
We have
$$\varphi_i(x_0)\, :=\, \exists x_1...\exists x_k
\bigl( \mathcal{T}\{y_1,...,y_n\}\wedge
\chi_1(u_1)\wedge...\wedge\chi_m(u_m)\bigr),$$
where both sets $\{y_1,...,y_n\}$ and $\{u_1,...,u_m\}$
are subsets of $\{x_0,...,x_k\}$,
and $\mathcal{T}\{y_1,...,y_n\}$ is a 
conjunction of higher arity atoms each with exactly the
set $\{y_1,...,y_k\}$ of distinct variables.
$\mathcal{T}\{y_1,...,y_n\}$ could also  be 
the formula $\top$, but we can ignore this case
without loss of generality.
Each $\chi_j(u_j)$
is an $\mathrm{FU}_1$-formula with at most one free variable.
We assume, w.l.o.g.,
that $\{y_1,...,y_n\}\subseteq\{u_1,...,u_m\}$.
We then consider the case where $x_0\in\{y_1,...,y_n\}$.
By symmetry, we may assume that $x_0 = y_1$.
We therefore have
$$\varphi_{i}(x_0)\, \equiv
\exists y_2...\exists y_n\bigl(\mathcal{T}\{y_1,...,y_n\}
\wedge{\chi}_1(y_1)\wedge...\wedge\chi_n(y_n)\bigr)
\ \wedge\ \Gamma,$$
where $\Gamma$ is a conjunction of
formulae of type $\exists v\chi_j(v)$.
The formulae $\exists v\chi_j(v)$ translate
to $\exists U.((\chi_j(v))^*)$,
where $U$ is the \emph{universal role} $U := \varepsilon\cup\neg\varepsilon$
and $(\chi_j(v))^*$ is the translation of $\chi_j(v)$.
We translate the formula 
$$\exists y_2...\exists y_n\bigl(\mathcal{T}\{y_1,...,y_n\}
\wedge{\chi}_1(y_1)\wedge...\wedge\chi_n(y_n)\bigr)$$
to $C_1\sqcap\exists\bigl(\, \mathcal{T}
\{y_1,...,y_n\}^{r[y_1,...,y_n]}\, \bigr)(C_2,...,C_n)$,
where $\cdot^r$ is the relation translation operator we defined above
and where $C_i = (\chi_i(y_i))^*$ for each $i$.
Consider then the case where $x_0\not\in\{y_1,...,y_n\}$.
This time
$$\varphi_{i}(x_0)\, \equiv
\exists y_1...\exists y_n\bigl(\mathcal{T}\{y_1,...,y_n\}
\wedge{\chi}_1(y_1)\wedge...\wedge\chi_n(y_n)\bigr)
\ \wedge\ \Gamma',$$
where $\Gamma'$ is a conjunction of
formulae of the type  $\exists v(\chi_i(v))$
with an additional formula $\chi(x_0)$.
%
%in the case $x_0\in\{u_1,...,u_m\}$.
%
We translate $\exists v(\chi_i(v))$ to $\exists U.((\chi(v))^*)$ as
before, and $\chi(x_0)$ to $(\chi(x_0))^*$.
We still need to translate
the formula 
$$\beta := \exists y_1...\exists y_n\bigl(\mathcal{T}\{y_1,...,y_n\}
\wedge{\chi}_1(y_1)\wedge...\wedge\chi_n(y_n)\bigr),$$
which does not have free variables.
We remove the quantifier $\exists y_1$ from
the  prefix $\exists y_1...\exists y_n$
and translate the  resulting formula to
$$D\, :=\, C_1\sqcap\exists\bigl(\, \mathcal{T}
\{y_1,...,y_n\}^{r[y_1,...,y_n]}\, \bigr)(C_2,...,C_n),$$
where $C_i = (\chi_i(y_i))^*$ for each $i$.
We let $(\beta)^* := \exists U.D$.
We have now shown
how to translate the formula $\psi(x_0) := \exists x_1...\exists x_k\varphi$,
where $x_0$ is a free variable in $\psi(x_0)$.
To translate a formula $\exists x_0...\exists x_k\varphi$ without a
free variables, we apply arguments virtually identical to
the ones already presented above.\qed
\end{proof}
%
%\end{comment}
%

%%%%%%%%
%%%%%%%
%%%%%%%%5
%%%%%%%%%55
%%%%%%%%%
%%%%%%%%%%%%
%%%%%%%%%%%%
%%%%%%%%%%%
%%%%%%%%%%%
%%%%%%%%%%%
%%%%%%%%%%
%%%%%%%%%%

\end{document}